\documentclass[runningheads, envcountsame, a4paper]{llncs}

\usepackage{times}
\usepackage{soul}
\usepackage{url}
\usepackage[utf8]{inputenc}
\usepackage[small]{caption}
\usepackage{graphicx}
\usepackage{booktabs}
\usepackage{algorithm}
\usepackage{algorithmic}
\usepackage{thmtools}
\usepackage{thm-restate}

\usepackage{graphicx}
\usepackage{bussproofs}
\usepackage{amssymb}
\usepackage{amsmath}
\usepackage{amsfonts}
\usepackage{stmaryrd}
\title{From Actions to Obligations: \\ A Deontic Action Model Logic}

\author{Giorgio Cignarale\orcidID{0000-0002-6779-4023}}
\authorrunning{G. Cignarale}

\institute{TU Wien, Vienna, Austria\\
\email{giorgio@logic.at}}

\usepackage{xspace}
\usepackage{tikz-cd}
\usepackage{tikz}
\usepackage{pgf}
\usetikzlibrary{arrows, automata}
\usetikzlibrary{calc,trees,positioning,arrows,chains,shapes.geometric, decorations.pathreplacing,decorations.pathmorphing,shapes,matrix,shapes.symbols,shapes,snakes,automata,backgrounds,petri,arrows.meta,positioning,arrows,calc,matrix,shapes.multipart,decorations.text,calc,arrows.meta}
\tikzstyle{nod}= [circle, draw,inner sep=0pt, minimum size=0.5cm]
\tikzstyle{m}=[circle, thin, draw, minimum size=14mm,inner sep=3pt]
\tikzset{reflexive left/.style={->,loop,looseness=10,in=160,out=220},
reflexive right/.style={->,loop,looseness=10,in=20,out=320}}

\usepackage{hyperref}
\usepackage{multicol}
\usepackage[nameinlink]{cleveref}

\newcommand{\La}{\mathcal{L}}
\newcommand{\calM}{\mathcal{M}}

\newcommand{\E}{\mathbb{E}}

\newcommand{\SP}{O}

\newcommand{\actions}{\mathcal{A}}
\renewcommand{\models}{\vDash}

\usepackage{amsmath,amssymb}

\makeatletter
\newcommand{\genericdot@}[2]{
  \mathbin{\mathpalette\genericdot@@{{#1}{#2}{\cdot}}}%
}
\newcommand{\genericdotop@}[2]{
  \DOTSB\mathop{\mathpalette\genericdot@@{{#1}{#2}{\boldsymbol{\cdot}}}}\slimits@
}
\newcommand{\genericdot@@}[2]{\genericdot@@@#1#2}
\newcommand{\genericdot@@@}[4]{%
  \vphantom{#3}%
  \begingroup
  \sbox\z@{$\m@th#1#3$}%
  \ooalign{%
    \usebox{\z@}\cr
    \hidewidth
    \raisebox{#2\ht\z@}[\z@][\z@]{$\m@th#1#4$}%
    \hidewidth\cr
  }%
  \endgroup
}
\newcommand{\cupdot}{\genericdot@{0.3}{\cup}}
\newcommand{\veedot}{\genericdot@{0.3}{\vee}}
\newcommand{\capdot}{\genericdot@{-0.2}{\cap}}
\newcommand{\wedgedot}{\genericdot@{-0.2}{\wedge}}

\newcommand{\bigcupdot}{\genericdotop@{0.25}{\bigcup}}
\newcommand{\bigveedot}{\genericdotop@{0.25}{\bigvee}}
\makeatletter

\begin{document}

\maketitle              

\begin{abstract}

We introduce the Deontic Action Model Logic (\textbf{DAML}), a dynamic modal framework for reasoning about obligations over actions in multi-agent systems. \textbf{DAML} extends the epistemic Action Model Logic by incorporating deontic evaluation mechanisms that assess agents’ actions in terms of 
both the desirability and the likelihood of their outcomes.
Obligations arise for those actions that maximize expected deontic value among an agent’s available alternatives at a given decision point, yielding a formal account
for reasoning about conditional and context-sensitive obligations in settings involving strategic interaction and incomplete information.
\textbf{DAML} supports principled action selection in norm-governed multi-agent systems, and is the first such framework to derive these obligations using the action model logic machinery.
We provide an axiomatization of the logic and prove soundness and completeness with respect to its semantics. Finally, we demonstrate the expressive power of our framework through applications to the Miners’ Puzzle and other multi-agent deontic scenarios.

\end{abstract}


\section{Introduction}
\label{sec:Intro}

Deontic logic \cite{Gabbay2013HandbookOD,sep-logic-deontic} studies normative concepts such as obligations, permissions, and prohibitions, and has been widely applied in philosophy, legal theory, and computer science, in particular for the formal specification of normative and multi-agent systems. Formally, it extends propositional logic with modal operators expressing normative force, such as obligation ($O\varphi$), permission ($P\varphi$), and prohibition ($F\varphi$), often further enriched with conditional modalities $O(\varphi\mid\psi)$ to capture context-dependent obligations.

There are two extensions of deontic logic that are relevant to our study.
First, 
Seeing To It That (STIT) logic \cite{Horty2001,sep-logic-action} is a family of modal logics designed to formally represent agency, choice, and responsibility. Its central idea is to capture what it means for an agent to see to it that certain state of affairs holds. 
In particular, the deontic STIT framework is rooted in a utilitarian approach to choice-making \cite{Horty2001} and is employed in multi-agent settings \cite{ConflictingSTIT}.
Second, dynamic deontic logics \cite{Meyer1987-MEYADA,Segerberg2012-DAL} introduce dynamic operators to evaluate permissions and obligations in relation to agents’ actions. Crucially, however, these frameworks do not explicitly incorporate epistemic considerations.

Epistemic logic (EL) \cite{Hintikka1962-HINKAB,van2015handbook} provides a systematic method for representing and reasoning about epistemic states, i.e., what agents know, believe, or consider possible, and has been wildly successful in a number of different areas, such as computer science, economics, philosophy and game theory.
Dynamic epistemic logic (DEL) \cite{plaza,ditmarsch2007dynamic}  extends EL with epistemic updates, such as public announcements, private observations and other complex epistemic actions.
Perhaps the most well-known of these logics is public announcement logic (PAL) formulated by Plaza \cite{plaza},  wherein
an announcement of a true proposition 
$\varphi$ eliminates all possible worlds incompatible with it, thereby refining the epistemic model.
Subsequent generalizations \cite{Baltag1998}, led to the formulation of action model logic (AML), which captures a broader class of epistemic actions beyond public announcements, including private and semi-private information updates. 

With a few notable exceptions \cite{lomuscio2003deontic,Pacuit2006-PACTLO-4}, systematic interactions between epistemic and deontic frameworks remain limited. Nevertheless, many deontic problems crucially rely on an underlying epistemic framework or, at the very least, involve some degree of uncertainty. A well-known example is the Miners’ Puzzle:

\begin{example}[Miners' Puzzle \cite{Kolodny2010-KOLIAO}]
\label{ex:miners}
Ten miners are trapped either in shaft A or in shaft B, but we do not know which.
Flood waters threaten to flood the shafts. We have enough sandbags to block one shaft, but not both. If we block one shaft, all the water will go into the  other shaft, killing any miners inside it.
If we block neither shaft, both shafts will fill halfway with water, and just one miner, the lowest in the shaft,
will be killed.
\end{example}

The goal of this paper is to provide a combined framework, where the epistemic effects of agent's actions are evaluated, providing, in turn, obligations towards the actions that maximize an expectation function.

An argument for a “Bayesian” semantics of deontic modalities has been proposed by Lassiter \cite{lassiter2016linguistic,Lassiter2017}.
The core idea is that deontic modals have a semantical structure that is scalable, and that deontic scales are formally identical to the expected utility scales used in Bayesian decision theory, with the difference that the function assigns an \emph{expected moral value} instead of a personal utility.
The expectation function is computed by (i) assigning a value $V(w)$ to each possible world $w\in W$ in a model, representing how morally or practically desirable it would be if all of the facts of the world were arranged as in $w$; and (ii) by using a probability measure $P$, representing how likely a given possible world is. From (i) and (ii), \cite{lassiter2016linguistic} argues that the expected moral value of a proposition in a set of worlds can be evaluated by the the average value of the worlds where the proposition is true:

$$\E(\varphi) = \sum_{w \in \llbracket\varphi\rrbracket} V(w) \times P(\{w\}|\varphi) $$

Where $\llbracket\varphi\rrbracket$ is the truth set of that formula, i.e., the set of worlds where $\varphi$ holds.

We endorse this idea and we provide a possible-world semantics accounting for the expected moral value framework proposed by \cite{lassiter2016linguistic}.\\

\textbf{Main contributions:} 
\begin{itemize}
 
   \item Inspired by Lassiter \cite{lassiter2016linguistic}, we model deontic statements in a scalar way, by assigning
   to each world in a Kripke model a desirability value that we use to compute an expected deontic value of a model (or a portion thereof).

   \item Following the dynamic deontic logic's approach \cite{Meyer1987-MEYADA,Segerberg2012-DAL}, we treat actions as first class citizens of deontic statements and we employ the Action Model Logic (AML) machinery \cite{Baltag1998} to formally reason about the epistemic consequences of agents’ actions in order to compare their resulting outcomes. More concretely, the kind of questions that we aim at answering are: 
   "If agent $i$ were to perform action $\alpha_i$, would the resulting outcome be more desirable than that obtained by performing action $\beta_i$?"

   \item 
   We introduce a new binary deontic modality $\SP_i(\alpha_i|\varphi)$ that reads as "agent $i$ ought to perform action $\alpha_i$ under condition $\varphi$". Its truthfulness depends both on (i) $\varphi$ holding after action $\alpha_i$, and (ii) action $\alpha_i$ leading to the outcome with the highest expected deontic value.
   In particular, $\SP_i(\alpha_i|\varphi)$ adheres to the STIT principle \cite{Horty2001}: agent $i$ sees to it that the best expected outcome is obtained, meaning that she ought to perform action $\alpha_i$ under the assumption that $\varphi$.

   \item We prove the soundness and completeness of the resulting proof system of the logic and we illustrate its usefulness with some examples.

\end{itemize}

The resulting Deontic Action Model Logic (\textbf{DAML}) provides a formal reasoning framework for agents that must select actions under epistemic uncertainty while taking normative constraints into account. By explicitly representing and comparing the expected deontic consequences of alternative actions, the logic supports principled decision-making in norm-governed multi-agent environments.
In particular, an action is considered most desirable when it both (i) significantly reduces uncertainty and (ii) leads to normatively preferable consequences. Crucially, the framework also supports reasoning in intermediate cases where neither criterion is fully optimized, that is, under incomplete information.

\textbf{Paper organization}
The rest of this paper is organized as follows: \Cref{sec:formalism} introduces some preliminary definitions for our framework. The novel Deontic Action Model Logic \textbf{DAML} is presented in \Cref{sec:DAML}, and its soundness and completeness are shown in \Cref{sec: reduction}.
\Cref{sec:examples} is dedicated to examples, including the famous Miner's puzzle (\Cref{sec:miner}), and a more complex multi-agent scenario (\Cref{sec:med}). 
Finally, some conclusions are offered in \Cref{sec:conclusion}.

\section{Expectation functions and submodels}
\label{sec:formalism}

In this section, we elaborate some preliminary concepts and definitions of our novel \textbf{DAML}. 
We assume a finite  non-empty set of agents $\Pi= \{i,j,\ldots \}$ and a fixed non-empty set of possible actions $\alpha_i \in \actions$, indexed by agents in $\Pi$ indicating the agent that can perform said action.
While it is possible to create more complex actions using PDL \cite{sep-logic-dynamic}, we only make use of action concatenation $\alpha_i;\beta_i$, here representing the execution of action $\alpha_i$ followed by $\beta_i$.
As standard in EL, we rely on Kripke models $\calM = \langle W, R_i, V \rangle$, where $W \ne \varnothing$ is a non-empty set of \emph{possible worlds}, called \emph{domain}, $R_i \subseteq 2^{W\times W}$ is a binary \emph{accessibility relation}, one for each agent $i \in \Pi$ and $V : Prop \to 2^W$ is a \emph{valuation function} that assigns to each atomic proposition $p \in Prop$ a set of worlds $V(p) \subseteq W$ where $p$ holds.
A \emph{pointed graded Kripke model} is a pair $(\calM, v)$ where $\calM$ is a Kripke model with $v \in W$. We use $R_i(v)$ to denote the set of all worlds that are $i$-accessible from $v$.

Unless specified otherwise, this paper relies on \textbf{S5} relations to model knowledge, i.e., where the accessibility relations are reflexive ($uR_iu$), transitive (if $uR_i v$ and $vR_i w$ then $uR_i w$) and euclidean (if $uR_i v$ and $uR_i w$ then $vR_i w$).

We continue with some introductory definitions, closely following \cite{ditmarsch2007dynamic}, with the distinction that we use agent-indexed actions $\alpha_i$.

\begin{definition}[Pointed action model]
\label{def:AM}
    An \emph{action model} is a triple $U = \langle E, Q_i, pre \rangle$ where
    $E \ne \varnothing$ is a finite set of \emph{action points}, each indexed by an agent, $Q_i \subseteq 2^{E\times E}$ is a binary \emph{accessibility relation}, one for each agent 
    and the \emph{precondition function} $pre: E \rightarrow \La$ assigns the precondition  $pre(\beta_i)\in \La$ that is necessary for an event $\beta_i \in E$ to happen.\footnote{ We return to the role of the precondition function in \Cref{rem:actual}.}
    A \emph{pointed action model} is a pair $(U,\alpha_i)$ with $\alpha_i \in E$.

\end{definition}

An action model simulates the results of an epistemic action in a Kripke model via the \emph{product update} operation:

\begin{definition}[Product update]
\label{def:prod_up}
    The \emph{(restricted modal) product update} of  a Kripke model $\calM = \langle W,R_i,V \rangle$ with an action model $U = \langle E, Q_i, pre \rangle$ is   a Kripke model $\calM \otimes U := \langle S',R_i',V' \rangle$ where:
\begin{itemize}
    \item $W' := \{(v, \beta_i) \in W \times E \mid \calM, v \vDash pre(\beta_i) \}$,
    \item $R_i' := \{\bigl((v, \beta_i),(u, \gamma_i)\bigr)\in W'\times W' \mid (v, u) \in R_i$ and $(\beta_i, \gamma_i) \in Q_i\}$,
    
    \item $V'(p) := \{(v, \beta_i) \in W' \mid v \in V(p)\}$.
\end{itemize}
If $W' = \varnothing$, the product update is undefined.
\end{definition}

Intuitively, the result of a product update is a Kripke model that preserves those worlds that satisfy the preconditions of the actions. Their epistemic effects can be evaluated in the semantics:
$\calM, w \vDash [U,\alpha_i]\varphi$, if{f} $\calM,w \models pre(\alpha_i)$ implies $\calM \otimes U, (w,\alpha_i) \vDash \varphi$ \cite{ditmarsch2007dynamic}. \Cref{tab:PALtheory} shows the axiomatic system of  AML. We adopted the definition in \cite{ditmarsch2007dynamic} using our notation.
In particular, the action model composition operation $U \circ U'$ represents the effects of applying consecutive action models with a single combined update model \cite{ditmarsch2007dynamic}:

\begin{definition}[Action model composition]
\label{def:compo}
    Given two action models $U =(E, Q_i, pre)$ and $U'=(E', Q', pre')$, their composition $U\circ U'=(E'', Q_i'', pre'')$ is defined as: $E''=E\times E'$, $(\alpha_i;\alpha'_i) Q_i'' (\beta_i;\beta'_i)$ iff $\alpha_i Q_i \beta_i$ and $\alpha'_i Q'_i \beta'_i$, and $pre''(\alpha_i;\alpha'_i)= \langle U,\alpha_i\rangle pre'(\alpha'_i)$.

\end{definition}

\begin{table}[]
    \centering
    \begin{tabular}{c l}
       \textbf{Taut}  &  All instantiations of propositional tautologies \\
       \textbf{S5}.  &  Axioms of \textbf{S5} modal logic \\
       \textbf{AM1}.  & $[ U,\alpha_i] p \leftrightarrow (pre(\alpha_i) \rightarrow p)$ \\
       \textbf{AM2}.  & $[ U,\alpha_i]\neg \varphi \leftrightarrow (pre(\alpha_i) \rightarrow \neg [ U,\alpha_i] \varphi)$ \\
       \textbf{AM3}.  & $[ U,\alpha_i] (\varphi \wedge \theta) \leftrightarrow ([ U,\alpha_i]\varphi \wedge [ U,\alpha_i]\theta)$ \\
       \textbf{AM4}.  &  $[ U,\alpha_i] K_i \varphi \leftrightarrow (pre(\alpha_i) \rightarrow \bigwedge_{\alpha_i Q_i \beta_i} K_i [ U,\alpha_i]\varphi)$ \\
       \textbf{AM5}.  &  $[ U,\alpha_i][ U',\beta_i] \varphi \leftrightarrow [U\circ U',\alpha_i;\beta_i] \varphi$
    \end{tabular}
    
\begin{prooftree}
\AxiomC{$\varphi\rightarrow \psi$}
\AxiomC{$\varphi$}
\RightLabel{MP}
\BinaryInfC{$\psi$}
\DisplayProof
\qquad
\AxiomC{$\varphi$}
\RightLabel{KN}
\UnaryInfC{$K_i\varphi$}
\DisplayProof
\qquad
\AxiomC{$\varphi$}
\RightLabel{$ U,\alpha$N}
\UnaryInfC{$[ U,\alpha_i]\varphi$}

\end{prooftree}
    
    \caption{Axiomatic system of AML.}
    \label{tab:PALtheory}
\end{table}

In our framework, deontic statements are evaluated by comparing the results of different actions for different agents. Thus, we need to introduce expectation functions that range over only a portion of the model, based on (i) agent's accessibility relation and (ii) models generated by actions.
The following definitions are meant to capture exactly these features, through the notion of \textit{deontic expectation function} and the various notions of submodels introduced below.

Following the ideas in \cite{lassiter2016linguistic}, we first introduce  a \textit{desirability function} $f: W \to \mathbb{N}$, which assigns natural numbers to possible worlds, representing their deontic desirability.
 We call a Kripke model equipped with a desirability function a \textit{graded Kripke model} $\calM = \langle W, R_i, V,f \rangle$.\footnote{We assume that each world  in a graded Kripke model has an "objective" desirability value, meaning that all agents have the same opinion about these values. This assumption can easily be relaxed e.g. by using weaker logics than \textbf{S5}. Exploring alternative ways to model subjective desirability values is left for future work.}
As argued by \cite{lassiter2016linguistic}, each deontic value $f(w)$ is meaningful only in relation to the deontic value of the other worlds $f(u)$.
In particular, a product update between a graded Kripke model and an action model leaves the desirability function unchanged: given $\calM =\langle W,R_i,V,f\rangle$, $\calM \otimes U = \langle W',R_i',V',f' \rangle $ is such that $f'(w,\alpha_i) = f(w)$.

Graded Kripke models allow to use an agent-based expectation function on models, that we label deontic expectation function:

\begin{definition}[Deontic expectation function]
Given an agent $i \in \Pi$ and a pointed Kripke model $(\calM,v)$, $i$'s expectation function on that model $\E_i(\calM)$ is defined as follows:

$$\E_i(\calM) := \sum_{w \in W} f(w) \times \frac{1}{|R_i(v)|} $$

\end{definition}

Intuitively, under the assumption that each possible world is equally possible\footnote{We aim at introducing a probability value to each world in future work.}, the deontic expectation function of an agent $i$ on a model $\calM$ weights an average of the value $f(w)$ for all worlds $w \in W$ that are accessible to that agent from a given world $v \in W$.\footnote{Naturally, using an expectation function is only one way to model agent's attitudes towards desirability values. Exploring other attitudes is left for future work.}
We call the value assumed by the deontic expectation function \textit{expected deontic value}.

The deontic evaluation function can also be applied to only portions of a Kripke model, for which we use the notion of submodel:\footnote{Similar concepts can be found in \cite{blackburn_rijke_venema_2001,APU}.}

\begin{definition}[Submodel]
    Let $(\calM,v)$ be a pointed model such that $R_i(v) \ne \varnothing$.
    The \emph{submodel} of $\calM$ constructed from $v$ is the Kripke model $\calM^v := \langle W', R_i', V'\rangle$ such that:
    \begin{itemize}
        \item $W'\subseteq W$ is the set of all worlds $u$ such that  $v R_{i_0} u_1 R_{i_1} u_2 R_{i_2} \dots u_{k}R_{i_k} u$  for some $k\geq 0$, with worlds $u_1,\dots,u_{k} \in W$ and agents $i_0,\dots, i_k \in \Pi$;
        \item $R_i' := R_i \cap (W' \times W')$ for each $i \in \Pi$;
        \item $V'(p) := V(p)\cap W'$ for each $p \in Prop$;
    \end{itemize}

We call \textit{agent-based submodel} a submodel constructed using accessibility relations of a single agent $i \in \Pi$ from a given point $v \in W$, labeled $\calM_i^v$.
\end{definition}

Since every submodel of a Kripke model is a Kripke model, the deontic expectation function can be formulated analogously for agent-based submodels $\calM_i^v$, written $\E_i(\calM_i^v)$.
Similar considerations can be extended to Kripke models resulting from product updates. We call the submodels of these Kripke models \textit{action-generated submodels}:

\begin{definition}[Action-generated submodel]

Given a graded Kripke model $\calM \otimes U = \langle W, R_i, V  . f\rangle$ obtained via the product update operation between a  graded pointed Kripke model $(\calM,v)$ and an action model $U= \langle E, Q_i, pre\rangle$ its \emph{action generated submodels} are the Kripke models $\calM^{v,\alpha_i} = \langle W^{\alpha_i}, R^{\alpha_i}, V^{\alpha_i}  f^{\alpha_i}\rangle$ such that:

    \begin{itemize}
        \item  $W^{\alpha_i} \subseteq W$ is the set of all worlds $(u,\alpha_i)$ such that  $(v,\alpha_i) R^{\alpha_i}_{i_0} (u_1,,\alpha_i) R^{\alpha_i}_{i_1} (u_2,\alpha_i) \\ R^{\alpha_i}_{i_2} \dots (u_{k},\alpha_i) R^{\alpha_i}_{i_k} (u,\alpha_i)$  for some $k\geq 0$, worlds $(u_1,\alpha_i),\dots,(u_{k},\alpha_i) \in W$, and agents $i_0,\dots, i_k \in \Pi$;
        \item $R_i^{\alpha_i} := R_i \cap (W^{\alpha_i} \times W^{\alpha_i})$ for each $i \in \Pi$;
        \item $V^{\alpha_i}(p) := V(p)\cap W^{\alpha_i}$ for each $p \in Prop$;
        \item  $f^{\alpha_i} = f$
    \end{itemize}
For all actions $\alpha_i\in E$.
We call \textit{agent-based action-generated submodel} an action-generated submodel constructed using accessibility relations of a single agent $i \in \Pi$ from a given point $(v,\alpha_i) \in W$, labeled $\calM_i^{v,\alpha_i}$.
\end{definition}

\begin{remark}[Action-generated submodel notation]
The notation $\calM^{w,\alpha_i}_i$ indicates the agent based action generated submodel of the Kripke model $\calM \otimes U$, which is not pointed. The superscript $w,\alpha_i$ denotes the point from which the submodel is constructed, but is not itself necessarily the point of evaluation. To make that explicit we write $\calM^{w,\alpha_i}_i, (w,\alpha_i)$.

\end{remark}

Agent-based action generated submodels are crucial in our framework to evaluate deontic statements, for which we  formulate a corresponding deontic expectation function:

\begin{definition}[Deontic expectation function for agent-based action generated submodels]
Given an agent $i \in \Pi$ and a  graded Kripke model $\calM \otimes U = \langle W, R_i, V,  f\rangle$ obtained via the product update with $U= \langle E, Q_i, pre\rangle$ such that $\alpha_i \in E$, we define the deontic expectation function for any of its agent-based action generated submodels $\calM_i^{v,\alpha_i} = \langle W^{\alpha_i}, R_i^{\alpha_i}, V^{\alpha_i}  ,f^{\alpha_i} \rangle $ as follows:

$$\E_i(\calM^{v,\alpha_i}_i) := \sum_{(w,\alpha_i) \in W^{\alpha_i}} f^{ \alpha_i}(w, \alpha_i) \times \frac{1}{|R_i^{\alpha_i}(v,\alpha_i)|}$$

\end{definition}

Before moving to the definitions of the language and the semantics of \textbf{DAML}, we state some important remarks:

\begin{remark}[Distinguishability of actions]
Considering the deontic expectation function of submodels generated by actions is particularly convenient in our framework, as we usually assume our action models to be only \textit{reflexive}, meaning that the actions $\alpha_i \in E$ are \emph{distinguishable} from each other, at least for the acting agent.
This reflects the fact that the acting agent \textit{knows} the results of the different actions that she can perform.
A single-agent example of such action model is provided in \Cref{fig:Miners_AM} (top right).
\end{remark}

\begin{remark}[Decision point]
In the following, we impose some important restrictions on our action models.
First, each action model $U=\langle E, Q_i, pre\rangle$ expresses a set of actions available to a single agent $i$ at a given point of the system evolution, that we call \textit{decision point}. In particular, each decision point $U$ consists of at least two actions $|E| \geq 2$. 
We also assume that none of the actions at a given decision point leads to an empty (sub)model, i.e., the agents maintain consistency through their actions.
As we use action models to simulate the consequences of agent's actions, we do not use pointed action models. In this sense, our action models are closer to \textit{multi-pointed action models} \cite{Hales}, wherein one considers a set of actions instead of a single point.
Considering synchronous and parallel actions of agents is outside the scope of the current work.
\end{remark}

\section{Deontic Action Model Logic}
\label{sec:DAML}

In this section, we introduce the main definitions of \textbf{DAML}.
The final ingredient for our logic is a set of \textit{expectation atoms} $\varepsilon = \{ e^{\alpha_i}_i, e^{\beta_j}_j \ldots \}$ expressing the highest expected deontic value of action-generated submodels, where  $e^{\alpha_i}_i$ stands for $i$'s expected deontic value of the submodel generated by the action $\alpha_i$, at a given decision point, i.e., $\E_i(\calM^{w,\alpha_i}_i)$.

We can now introduce the language of \textbf{DAML}:

\begin{definition}[Language]
The language $\La^{DAML}$ is defined by the following grammar: 
    \[\varphi ::=  \ p  \ | \  e^{\alpha_i}_i \ | \ \neg\varphi \ | \  (\varphi \wedge \varphi) \ | \ K_i \varphi  \ | \ \langle U,\alpha_i\rangle\varphi \ | \ \SP_i( U,\alpha_i| \varphi) \]

With $p \in Prop$, $ e^{\alpha_i}_i \in \varepsilon$ and $\alpha_i \in \mathcal{A}$.
We use standard propositional abbreviations, $\bot := p \wedge \neg p$, $\top := \neg \bot$, $\hat{K}_i\varphi := \neg K_i\neg \varphi$ and $[ U,\alpha_i] \varphi := \neg \langle U,\alpha_i\rangle\neg \varphi$. We denote with $\La$ the fragment of $\La^{DAML}$ without the new modal operator $\SP_i( U,\alpha_i | \varphi)$.

\end{definition}

The two new elements that are added to the language of AML are the expectation atoms $e^{\alpha_i}_i$ and the novel modality $\SP_i ( U,\alpha_i|\varphi)$, which reads as "agent $i$ ought to perform action $\alpha_i$ under the assumption that $\varphi$".
Their meaning is captured by the following semantics:

\begin{definition}[Semantics of $\La^{DAML}$]
\label{def:semLA}
Given a graded pointed Kripke model $(\calM,v) = (\langle W, R_i, V  ,f\rangle,v)$ with $v,w \in W$, a pointed action model $(U,\alpha_i) = (\langle E, Q_i, pre\rangle, \alpha_i)$, with $\alpha_i, \beta_i \in E$, the agent submodel $\calM^v_i$ of $(\calM,v)$ and the agent-based action generated submodel $\calM^{(v,\alpha_i)}_i$ of $\calM \otimes U, (v, \alpha_i)$ we define the following semantics:

\begin{itemize}
    \item $\calM, w \models p \text{ iff } w \in V(p)$
    \item $\calM^{w,\alpha_i}_i  ,(w,\alpha_i) \models e^{\alpha_i}_i \text{ iff } \E_i(\calM^{w,\alpha_i}_i) \geq \E_i(\calM^{w,\beta_i}_i)$ for all  $(w,\alpha_i) \in W^{\alpha_i} \ne (w,\beta_i) \in W^{\beta_i}$ \footnote{Here $W^{\alpha_i}$ and $W^{\beta_i}$ are the domains of the action generated submodels from actions $\alpha_i$ and $\beta_i$. Recall that each $U$ represents a decision point that collects only the actions of a single agent $i \in \Pi$. Additionally, using $\geq$ implies that there is always an action $\alpha_i \in E$ leading to the highest expectation value}
    \item $\calM, w \models \neg \varphi \text{ iff } \calM, w \not\models \varphi $
    \item $\calM, w \models (\varphi \wedge \psi) \text{ iff }\calM, w \models \varphi \text{ and } \calM, w \models \psi $
    \item $\calM, w \models K_i\varphi \text{ iff } \calM, w \models \varphi \text{ for all } w'\in W \text{ s.t. }
    w' \in R_i(w)$
    \item $\calM, w \models \langle U,\alpha_i \rangle \varphi$  if{f}  
    $\calM, w \models pre(\alpha_i)$ and $\calM\otimes U, (w,\alpha_i) \models \varphi$, where $\calM\otimes U := \langle W',R_i',V' \rangle$ is defined in \Cref{def:prod_up};

    \item $\calM_i^w ,w \models \SP_i ( U,\alpha_i|\varphi) $ iff $ \calM_i^w, w \models \langle  U,\alpha_i \rangle \varphi $ and $ \calM_i^{w,\alpha_i}  ,(w,\alpha_i) \models e^{\alpha_i}_i$

\end{itemize}

In particular, $\calM \otimes U$ is defined whenever $\calM \models\langle  U,\alpha_i \rangle \varphi$ is.
We say that $\calM_i^v \models \varphi$ holds if, for all $w \in W$,  $\calM_i^v,w \models \varphi$.
Given a class of models $C$, we say that $\varphi$ is valid in the class of models $C$ iff $\calM \models \varphi$ for all $\calM \in C$. \looseness=-1
\end{definition}

Intuitively, expectation atoms are true if and only if the expected deontic value for $i$ for that action-generated submodel is greater or equal the expected deontic value of the other action-generated submodels at that decision point.\footnote{Recall that it is not possible to have an expectation atom for no action, as we assumed both $\actions \ne \varnothing$ and $E \ne \varnothing$.}

The new modality $\SP_i ( U,\alpha_i|\varphi)$ is true if and only if (i) $\varphi$ is true after action $\alpha_i$, and (ii) action $\alpha_i$ leads to the action generated submodel whose expected deontic value is greater or equal the expected deontic value of the submodels generated via other actions at that decision point.\footnote{Because some actions might lead to equally good expected deontic value for the corresponding action-generated submodels, more than one ought formula might be true for an agent at a decision point. Having multiple (and often conflicting) obligations is an expected feature of deontic logics \cite{Horty2001,ConflictingSTIT} and exploring resolution mechanisms is outside the scope of this work.}
In particular, expectation atoms hold globally in a submodel, that is, they are true in every world of an agent-based action generated submodel. This is because their truth depends on expectation functions which range over submodels. Consequently, also ought formulas are true (or false) in every world, as their truth depends on the truth of the corresponding expectation atoms.
For this reason in the rest of the paper we will omit the evaluation world for these formulas, writing $\calM^{v,\alpha_i}_i  \models e^{\alpha_i}_i$ and $\calM_i^v \models \SP_i (U,\alpha_i|\varphi)$. 

The core idea is that we interpret the ought modality  with an intrinsic STIT component: agent $i$ sees to it that the highest expected outcome is obtained,  and hence ought to perform action $\alpha_i$, under the assumption that $\varphi$.

At the same time, in the truth conditions for the modality $\SP_i( U,\alpha_i | \varphi)$, $\langle U,\alpha_i \rangle \varphi$ is evaluated in the pointed model $\calM_i^v,v$ constructed from $v$.

Taken together, these features make \textbf{DAML} a framework that models agents’ hypothetical reasoning, allowing them to evaluate the consequences of alternative actions at a decision point \textit{prior} to execution and to select those actions that lead to the most desirable outcomes given their current informational state.

In the next section we prove soundness and completeness of \textbf{DAML}.

\section{Soundness and completeness}
\label{sec: reduction}

The main goal of this section is to prove that the language $\La^{DAML}$ can be reduced to the language $\La$ which does not contain deontic operators, and for which completeness is well known \cite{ditmarsch2007dynamic}.
Technically, $\La$ differs from the standard language of AML as it contains expectation atoms $e_i^\alpha$. However, this does not constitute a problem for our reduction: since the completeness of AML depends on the completeness of EL via reduction, adding a set of atoms (like $e_i^\alpha$) to EL does not alter its completeness, which in turn does not alter the completeness of AML. In addition, we avoid potential circularity by restricting the preconditions used in our action models to formulas solely from the language of AML, as per \Cref{def:AM}.

Our goal is to formulate reduction axioms showing that for every formula with a deontic operator there exist a translation of that formula without that operator, without altering its meaning. 
Before that, we introduce the following elementary lemma used in later proofs:

\begin{restatable}[]{lemma}{1}
\label{lemma:glob}

For any agent $i$ and corresponding action $\alpha_i$:

\begin{itemize}
    \item[(i)] $\calM_i^{v,\alpha_i}  \models e^{\alpha_i}_i \leftrightarrow K_i e^{\alpha_i}_i$;

    \item [(ii)]$\calM_i^v \models \SP_i( U,\alpha_i|\varphi) \leftrightarrow K_i \SP_i( U,\alpha_i|\varphi)$
\end{itemize}

\end{restatable}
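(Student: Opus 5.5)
The plan is to exploit the observation made right after \Cref{def:semLA}: the truth of an expectation atom, and hence of an ought formula, depends only on expectation values computed over whole submodels, not on the world of evaluation. I will therefore first show that $e^{\alpha_i}_i$ has a world-independent truth value throughout the agent-based action-generated submodel $\calM_i^{v,\alpha_i}$, and then derive (i) from the semantics of $K_i$ together with the fact that $R_i^{\alpha_i}$ stays inside the submodel. Part (ii) will then follow from (i), using the same reasoning on $\calM_i^v$.

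For (i), I fix $\calM_i^{v,\alpha_i}$. The truth condition for $e^{\alpha_i}_i$ compares $\E_i(\calM_i^{v,\alpha_i})$ with $\E_i(\calM_i^{v,\beta_i})$ for the other actions $\beta_i$ at the decision point $U$. These quantities are sums over the domains $W^{\alpha_i}$ and $W^{\beta_i}$, normalised by $|R_i^{\alpha_i}(v,\alpha_i)|$. Since $R_i$ is \textbf{S5}, every world of the agent-based submodel is $R_i$-equivalent to the generating point, so the normalising cell is the same whichever world of the cell is used. Consequently the comparison has a single truth value on all of $W^{\alpha_i}$.

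The direction $K_i e^{\alpha_i}_i \rightarrow e^{\alpha_i}_i$ is just reflexivity (axiom T). For the converse, suppose $e^{\alpha_i}_i$ holds at $(w,\alpha_i)$. Every $R^{\alpha_i}_i$-successor lies in $W^{\alpha_i}$, because the submodel is closed under the relation by construction. By the global invariance just established, $e^{\alpha_i}_i$ holds at each such successor, so $K_i e^{\alpha_i}_i$ holds. As this argument works for every world, the biconditional is true at every world of $\calM_i^{v,\alpha_i}$, which is what $\models$ on a submodel requires.

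For (ii), again the right-to-left direction is reflexivity. For left-to-right, I unfold $\SP_i(U,\alpha_i|\varphi)$ as the conjunction of $\langle U,\alpha_i\rangle\varphi$ at the point and $\calM_i^{w,\alpha_i}\models e^{\alpha_i}_i$. The second conjunct is world-independent by (i). For the first, I follow the paper's convention that $\langle U,\alpha_i\rangle\varphi$ inside the ought clause is evaluated in $\calM_i^v$ from the generating point $v$, not at the varying $w$. With that reading the whole clause is constant on $\calM_i^v$, so truth at $w$ transfers to every $R_i$-successor, and $K_i\SP_i(U,\alpha_i|\varphi)$ follows.

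The main obstacle is exactly this world-independence. Its justification rests on the remark following \Cref{def:semLA} rather than on a literal reading of the clauses, which subscript the evaluation world. I would state explicitly that, per that remark, the submodel superscript is fixed by the generating point and the normalisation $|R_i^{\alpha_i}(\cdot)|$ is constant on the $R_i$-cell by the \textbf{S5} properties. Once this is granted, the remaining steps are routine.
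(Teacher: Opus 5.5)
Your proof is correct and follows the paper's own argument. Left-to-right uses the fact that the expectation atom, and hence the ought formula, holds globally on the relevant submodel, so it holds at every $R_i$-successor; right-to-left is axiom \textbf{T} (reflexivity). You justify the world-independence explicitly (fixed generating point, \textbf{S5} cells, and the convention that $\langle U,\alpha_i\rangle\varphi$ is evaluated at $v$), whereas the paper simply assumes it through its remark after \Cref{def:semLA} and says ``proven analogously'' for (ii).
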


\begin{proof}
$(\Rightarrow)$ (i) Since $\calM_i^{v,\alpha_i} \models e^{\alpha_i}_i$, then for all $ (w,\alpha_i) \in W$, $\calM_i^{v,\alpha_i}, (w,\alpha_i) \models e^{\alpha_i}_i$, making $K_i e^{\alpha_i}_i$ true. (ii) is proven analogously.

    $(\Leftarrow)$ (i) and (ii) follow from axiom \textbf{T} of $K_i$.
\end{proof}

We can now introduce the axiomatic system of \textbf{DAML}, 
which extends the axiom system of AML \cite{ditmarsch2007dynamic} with the following axioms for the new expectation atoms and the novel deontic modality:

\begin{definition}[Axioms of \textbf{DAML}]
\label{def:redAx}
For $i \in \Pi$, $\alpha_i, \beta_i \in \actions$, $p \in Prop$, $\varphi,\psi \in \La^{DAML}$ and $e^{\alpha_i}_i \in \varepsilon$, the axiomatic system of \textbf{DAML} is shown in \Cref{tab:DAMLtheory}. We write $DAML \vdash \varphi$ to indicate that $\varphi$ is derivable in \textbf{DAML}.

\begin{table}[]
    \centering
    \begin{tabular}{c l}
       \textbf{AML} & Axioms of AML (\Cref{tab:PALtheory}) \\
       \textbf{E1}  & $ \bigvee_{i \in \Pi}e^{\alpha_i}_i$  \\
       \textbf{E2}  & $e^{\alpha_i}_i \rightarrow K_i e^{\alpha_i}_i$  \\
       \textbf{R1}  & $\SP_i( U,\alpha_i|p) \leftrightarrow pre(\alpha_i) \wedge p \wedge e^{\alpha_i}_i$  \\
       \textbf{R2}  & $\SP_i( U,\alpha_i|\varphi \wedge \psi) \leftrightarrow \SP_i( U,\alpha_i|\varphi) \wedge \SP_i( U,\alpha_i|\psi)$ \\
       \textbf{R3}  & $\SP_i( U,\alpha_i|\neg \varphi) \leftrightarrow pre(\alpha_i) \wedge \neg \SP_i( U,\alpha_i|\varphi)$ \\
       \textbf{R4}  &  $\SP_i( U,\alpha_i|K_i\varphi) \leftrightarrow K_i \SP_i( U,\alpha_i|\varphi)$ \\
       \textbf{R5} & $\SP_i( U,\alpha_i|\langle U',\beta_i\rangle\varphi) \leftrightarrow \langle U\circ U',\alpha_i;\beta_i\rangle\varphi \wedge e^{\alpha_i}_i$ \\
       \textbf{R6}  &  $\SP_i( U,\alpha_i|\SP_i( U',\beta_i|\varphi)) \leftrightarrow \SP_i(U\circ U',\alpha_i;\beta_i|\varphi) \wedge e^{\alpha_i}_i$
    \end{tabular}
    
\begin{prooftree}
\AxiomC{$\varphi\rightarrow \psi$}
\AxiomC{$\varphi$}
\RightLabel{\textbf{MP}}
\BinaryInfC{$\psi$}
\DisplayProof
\qquad
\AxiomC{$\varphi$}
\RightLabel{\textbf{KN}}
\UnaryInfC{$K_i\varphi$}
\DisplayProof
\qquad
\AxiomC{$\varphi$}
\RightLabel{$\alpha$\textbf{N}}
\UnaryInfC{$\langle U,\alpha_i\rangle\varphi$}

\end{prooftree}
    
    \caption{Axiomatic system of \textbf{DAML}.}
    \label{tab:DAMLtheory}
\end{table}

\end{definition}

In particular, axioms \textbf{E1} and \textbf{E2} express properties of the expectation atoms. \textbf{E1} states that at least one expectation atom holds for each agent. \textbf{E2} states that if an expectation atom is true it is also known. \textbf{R1-R6} are reduction axioms.
We begin by showing soundness of the axiomatic system.

\begin{restatable}[Soundness]{theorem}{S}
\label{thm:soundness}
The axiomatic system of \Cref{def:redAx} is sound. 
\end{restatable}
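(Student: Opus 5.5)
The argument is the usual one: soundness follows by induction on the length of derivations. The base case is that every axiom in \Cref{tab:DAMLtheory} is valid on graded S5 models paired with decision-point action models. The inductive step is that \textbf{MP}, \textbf{KN} and $\alpha$\textbf{N} preserve validity.

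The AML axioms and the rules \textbf{MP} and \textbf{KN} need no new work. The product update leaves the valuation and the accessibility relations unchanged in the sense of \Cref{def:prod_up}, and we only add $f$ and the expectation atoms, so we can cite the soundness proof of \cite{ditmarsch2007dynamic}. The one point to check is that expectation atoms behave like atoms under \textbf{AM1}. For $\alpha$\textbf{N} we read the diamond in the table as the box $[U,\alpha_i]$, since with the diamond the rule fails when $pre(\alpha_i)$ is false. Under that reading, validity of $\varphi$ gives $[U,\alpha_i]\varphi$ as in AML.

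For \textbf{E1}, we use the footnote to the semantics of $e^{\alpha_i}_i$. Since $E$ is finite and non-empty, and no action produces an empty submodel, the finitely many values $\E_i(\calM_i^{w,\beta_i})$ have a maximum, attained by some $\alpha_i$. The corresponding atom is true, so the disjunction holds. \textbf{E2} is exactly the left-to-right direction of \Cref{lemma:glob}(i). Its justification is that the truth of $e^{\alpha_i}_i$ depends only on the submodel, not on the evaluation world, so it holds at every $R_i$-successor.

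The reduction axioms \textbf{R1}--\textbf{R6} are checked by unfolding the clause for $\SP_i$: it says that $\langle U,\alpha_i\rangle\varphi$ holds and $e^{\alpha_i}_i$ holds. Each axiom then becomes the matching AML reduction for $\langle U,\alpha_i\rangle$, conjoined with $e^{\alpha_i}_i$.
\begin{itemize}
\item \textbf{R1} is the diamond form of \textbf{AM1}, $\langle U,\alpha_i\rangle p\leftrightarrow pre(\alpha_i)\wedge p$, with $e^{\alpha_i}_i$ added.
\item \textbf{R2} uses that the diamond distributes over conjunction for a fixed action point, which is a deterministic update, and that $e^{\alpha_i}_i$ is idempotent.
\item \textbf{R5} follows from \textbf{AM5} and \Cref{def:compo}.
\item \textbf{R6} unfolds the inner $\SP_i$ inside $\calM\otimes U$ and applies \textbf{AM5} again. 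This needs the expectation atom for $\alpha_i;\beta_i$ in $U\circ U'$ to agree with the one for $\beta_i$ after $\alpha_i$, which must be checked against the definition of action-generated submodels.
\item \textbf{R4} uses \Cref{lemma:glob}(ii) together with \textbf{AM4} and the reflexivity of the decision-point action models, so that the only $\beta_i$ with $\alpha_i Q_i\beta_i$ is $\alpha_i$ itself. Then $K_i$ commutes with $\langle U,\alpha_i\rangle$, and $e^{\alpha_i}_i$ can be moved in and out of $K_i$ by \textbf{E2} and \textbf{T}.
\end{itemize}

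The main obstacle is \textbf{R3}. Read literally, $\SP_i(U,\alpha_i|\neg\varphi)$ requires $e^{\alpha_i}_i$. The right-hand side $pre(\alpha_i)\wedge\neg\SP_i(U,\alpha_i|\varphi)$ is also satisfied when $e^{\alpha_i}_i$ fails, so the direction from right to left is not obviously valid. I would first try to close this gap by reading the ought operator as conditional on the optimality of $\alpha_i$ at the decision point. That is, I would restrict the validity claim to the submodels $\calM_i^{v}$ in which $e^{\alpha_i}_i$ holds, in line with the convention of writing $\calM_i^v\models\SP_i(U,\alpha_i|\varphi)$ globally. Under that restriction, $\neg\SP_i(U,\alpha_i|\varphi)$ together with $pre(\alpha_i)$ yields $\calM\otimes U,(w,\alpha_i)\not\models\varphi$, hence $\langle U,\alpha_i\rangle\neg\varphi$. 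If that reading cannot be justified from the definitions, I would fall back to strengthening the right-hand side to $pre(\alpha_i)\wedge e^{\alpha_i}_i\wedge\neg\SP_i(U,\alpha_i|\varphi)$ and verify that version instead. The same care about where $e^{\alpha_i}_i$ is evaluated, in $\calM$ versus $\calM_i^{w,\alpha_i}$, is needed in \textbf{R1}, \textbf{R5} and \textbf{R6}. I would handle it uniformly by invoking \Cref{lemma:glob} to treat $e^{\alpha_i}_i$ as a global constant of the relevant submodel.
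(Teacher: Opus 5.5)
Your doubt about \textbf{R3} is not a gap in your argument but a defect in the statement: the right-to-left direction is invalid, so the theorem as written cannot be proved.

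Take $\varphi=\bot$. Then $\SP_i(U,\alpha_i|\bot)$ is never true, and $\SP_i(U,\alpha_i|\neg\bot)$ amounts to $pre(\alpha_i)\wedge e^{\alpha_i}_i$. So this instance says $pre(\alpha_i)\rightarrow e^{\alpha_i}_i$, i.e.\ every executable action is optimal. The miners' puzzle refutes it ($\alpha_i$ at the world $(A,10)$).

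The paper's proof handles this axiom under the label \textbf{R2}, since the labels of \textbf{R2} and \textbf{R3} are swapped relative to \Cref{tab:DAMLtheory}. It only establishes left to right: its final step, ``sem.\ of $\SP_i$, prop.\ reasoning'', drops $e^{\alpha_i}_i$.

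Your fallback, putting $e^{\alpha_i}_i$ into the right-hand side, is the right repair and is valid. Your first option is not a soundness proof. Since \textbf{R3} is a schema over all action points, a model class validating all its instances must satisfy $pre(\alpha_i)\rightarrow e^{\alpha_i}_i$ for every $\alpha_i$, so all executable actions tie. A per-instance restriction is just the fallback in disguise. Your box reading of $\alpha$\textbf{N} is also a needed correction, which the paper's proof never addresses: with the diamond, $\top$ yields $pre(\alpha_i)$.

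Where your own argument breaks (as does the paper's) is \textbf{R4}. Even with $Q_i$ the identity, $K_i$ does not commute with $\langle U,\alpha_i\rangle$. We have $\langle U,\alpha_i\rangle K_i\varphi\leftrightarrow pre(\alpha_i)\wedge K_i[U,\alpha_i]\varphi$, whereas $K_i\langle U,\alpha_i\rangle\varphi$ also requires $K_i\,pre(\alpha_i)$. \Cref{lemma:glob}(ii) cannot bridge this:
\begin{itemize}
\item With the clause for $\SP_i$ as written, which checks $\langle U,\alpha_i\rangle\varphi$ at the current world, (ii) is itself false and \textbf{R4} fails left to right. At $(A,9)$ in the miners' puzzle, $\gamma_i$ is optimal and its precondition $9$ holds, so $\SP_i(U,\gamma_i|K_i\top)$ is true. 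But $9$ fails at the accessible world $(A,10)$, so $K_i\SP_i(U,\gamma_i|\top)$ is false.
\item If you instead check $\langle U,\alpha_i\rangle\varphi$ only at the base point, so that ought formulas become global, (ii) holds but \textbf{R4} fails right to left: $\langle U,\gamma_i\rangle A$ holds at $(A,9)$ while $\langle U,\gamma_i\rangle K_iA$ does not.
\end{itemize}
With the clause as written and \textbf{E2}, the valid reduction mirrors \textbf{AM4}: $\SP_i(U,\alpha_i|K_i\varphi)\leftrightarrow pre(\alpha_i)\wedge e^{\alpha_i}_i\wedge K_i(pre(\alpha_i)\rightarrow\SP_i(U,\alpha_i|\varphi))$.

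Two points you defer also fail as stated. In \textbf{R6}, suppose $e^{\alpha_i;\beta_i}_i$ is compared against all points of $U\circ U'$. Then another first move with a better continuation makes it false, while $e^{\alpha_i}_i$, and $e^{\beta_i}_i$ after $\alpha_i$, both hold. So the identification you need must be stipulated: let $e^{\alpha_i;\beta_i}_i$ compare only continuations $\alpha_i;\beta'_i$, as the paper's allergy example tacitly does. Separately, expectation atoms do not satisfy \textbf{AM1}. The formula $[U,\alpha_i]e^{\beta_j}_j$ concerns expected values computed after the update by $\alpha_i$, which may remove exactly the worlds that made $\beta_j$ optimal. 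So \textbf{AM1} must stay restricted to $p\in Prop$.
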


\begin{proof}
Soundness for axioms of epistemic logic and of reduction axioms of AML is standard \cite{ditmarsch2007dynamic}. Here we prove soundness of the newly introduced axioms \textbf{E1-E2} and of the reduction axioms \textbf{R1-R6}:

    \begin{itemize}

   \item[\textbf{E1}] $\bigvee_{i \in \Pi}e^{\alpha_i}_i$: easily follows from the semantics of $e^{\alpha_i}_i$ (\Cref{def:semLA}), and the fact that $\actions \ne \varnothing$ and $E \ne \varnothing$.

   \item[\textbf{E2}] $e^{\alpha_i}_i \rightarrow K_i e^{\alpha_i}_i$: follows from \Cref{lemma:glob}.

   \item[\textbf{R1}] $\SP_i( U,\alpha_i|p) \leftrightarrow (pre(\alpha_i) \wedge p) \wedge e^{\alpha}_i$ :

    $\calM_i^v,v \models \langle  U,\alpha_i \rangle p \text{ and } \calM^{v,\alpha_i}_i \models e^{\alpha_i}_i $ (semantics of $\SP_i$)

    $\calM_i^v,v \models pre(\alpha_i) $ and $ \calM_i^{v,\alpha_i},(v,\alpha_i) \models p $ and $ \calM^{v,\alpha_i}_i \models e^{\alpha_i}_i$ (axioms of AML).

        \item[\textbf{R2}] $\SP_i( U,\alpha_i|\neg \varphi) \leftrightarrow (pre(\alpha_i) \wedge \neg \SP_i( U,\alpha_i|\varphi))$:

        $\calM_i^v,v \models \langle  U,\alpha_i \rangle \neg \varphi \text{ and } \calM^{v,\alpha_i}_i \models e^{\alpha_i}_i$ (sem. of $\SP_i$)

        $\calM_i^v,v \models (pre(\alpha_i) \wedge \neg \langle  U,\alpha_i \rangle \varphi) \text{ and } \calM^{v,\alpha_i}_i \models e^{\alpha_i}_i$ (axioms of AML)

        $\calM_i^v,v \models pre(\alpha_i) \wedge \neg \SP_i( U,\alpha_i|\varphi)$ (sem. of $\SP_i$, prop. reasoning)

        \item[\textbf{R3}] $\SP_i( U,\alpha_i|\varphi \wedge \psi) \leftrightarrow \SP_i( U,\alpha_i|\varphi) \wedge \SP_i( U,\alpha_i|\psi)$:

    $\calM_i^v,v \models \langle  U,\alpha_i \rangle \varphi \wedge \langle  U,\alpha_i \rangle \psi \text{ and } \calM^{v,\alpha_i}_i \models e^{\alpha_i}_i $ (sem. of $\SP_i$)

    $ (\calM_i^v,v \models\langle  U,\alpha_i \rangle \varphi $ and $ \calM^{v,\alpha_i}_i, \models e^{\alpha}_i) \wedge (\calM_i^v,v \models\langle U, \alpha_i \rangle \psi $ and $ \calM^{v,\alpha_i}_i \models e^{\alpha_i}_i)$ (taut., prop. reasoning)

    $\calM_i^v \models \SP_i( U,\alpha_i|\varphi) \wedge \SP_i( U,\alpha_i|\psi)$ (sem. of $\SP_i$)

        \item[\textbf{R4}] $\SP_i( U,\alpha_i|K_i\varphi) \leftrightarrow K_i \SP_i( U,\alpha_i|\varphi)$:

        $\calM_i^v,v \models \langle  U,\alpha_i \rangle K_i\varphi $ and $ \calM^{v,\alpha_i}_i \models e^{\alpha_i}_i$ (sem. of $\SP_i$)

        $\calM_i^v,v \models K_i \langle  U,\alpha_i \rangle \varphi $ and $ \calM^{v,\alpha_i}_i \models K_i(e^{\alpha_i}_i) $ (axioms of AML, \Cref{lemma:glob})

        $\calM_i^v \models K_i \SP_i( U,\alpha_i|\varphi)$ (property of $K_i$, sem. of $\SP_i$). 

        \item[\textbf{R5}] $\SP_i( U,\alpha_i|\langle U',\beta_i\rangle\varphi) \leftrightarrow \langle U\circ U', \alpha_i;\beta_i\rangle\varphi \wedge e^{\alpha_i}_i$:
      
        $\calM_i^v,v \models \langle U,\alpha_i\rangle\langle U',\beta_i\rangle\varphi $ and $ \calM^{v,\alpha_i}_i \models e^{\alpha_i}_i$ (sem. of $\SP_i$)

        $\calM_i^v,v \models pre(\alpha_i) $ and $ \calM^{v,\alpha_i}_i, (v,\alpha_i) \models pre(\beta_i)$ and $\calM^{v,\alpha_i;\beta_i}_i, (v,\alpha_i;\beta_i) \models \varphi $ and $ \calM^{v,\alpha_i}_i \models e^{\alpha_i}_i$ (sem. of $\langle U,\alpha_i\rangle$) 
        
        $\calM_i^v,v \models \langle U\circ U',\alpha_i;\beta_i\rangle\varphi $ and $ \calM^{v,\alpha_i}_i \models e^{\alpha_i}_i$  (def. of $\langle U,\alpha_i \rangle$,  \Cref{def:compo}.)

         \item[\textbf{R6}] $\SP_i( U,\alpha_i|\SP_i( U',\beta_i|\varphi)) \leftrightarrow \SP_i(U\circ U',\alpha_i;\beta_i|\varphi) \wedge e^{\alpha_i}_i$:

         $\calM_i^v,v \models \langle U,\alpha_i\rangle \SP_i( U',\beta_i|\varphi)$ and $ \calM^{v,\alpha_i}_i \models e^{\alpha_i}_i$ (sem. of $\SP_i$)
         
         $\calM_i^v,v \models pre(\alpha_i) $ and $ \calM^{v,\alpha_i}_i \models \SP_i( U,\beta_i|\varphi) $ and $ \calM^{v,\alpha_i}_i \models e^{\alpha_i}_i$ (sem. of $\langle U,\alpha_i\rangle$)
        
         $\calM_i^v,v \models pre(\alpha_i)$ and $ \calM^{v,\alpha_i}_i, (v,\alpha_i) \models \langle U,\beta_i\rangle\varphi $ and $ \calM^{v,(\alpha_i;\beta_i)}_i \models e^{\alpha_i;\beta_i}_i $ and $\calM^{v,\alpha_i}_i \models e^{\alpha_i}_i$ (sem. of $\SP_i$)

          $\calM_i^v,v \models pre(\alpha_i) $ and $ \calM^{v,\alpha_i}_i, (v,\alpha_i) \models pre(\beta_i) $ and $ \calM^{v,(\alpha_i;\beta_i)}_i, (v,(\alpha_i;\beta_i)) \models \varphi $ and $ \calM^{v,(\alpha_i;\beta_i)}_i \models e^{\alpha_i;\beta_i}_i $ and $ \calM^{v,\alpha_i}_i \models e^{\alpha_i}_i$ (sem. of $\langle U,\alpha_i\rangle$)

          $\calM_i^v \models \SP_i(U\circ U',\alpha_i;\beta_i|\varphi) $ and $ \calM^{v,\alpha_i}_i \models e^{\alpha_i}_i$ (sem. of $\SP_i$,  ,\Cref{def:compo}.)

    \end{itemize}
\end{proof}

We can now move to the completeness proof, for which we introduce the following definitions:

\begin{definition}[Translation]
\label{def:trans}
The translation $t: \La^{DAML} \rightarrow \La$ is defined as follows:
\setlength{\columnsep}{2pt}
\begin{multicols}{2}
\begin{itemize}
    \item $t(p)=p$
    \item $t(e^{\alpha_i}_i) =e^{\alpha_i}_i$ 
    \item $t(\neg \varphi)=\neg t(\varphi)$
    \item $t(\varphi \wedge \psi) =t(\varphi)\wedge t(\psi)$
    \item $t(K_i \varphi)= K_i t(\varphi)$

    \item $t(\langle U,\alpha_i\rangle p)$=
    \\
    $t( pre(\alpha_i)) \wedge p$

   \item $t(\langle U,\alpha_i\rangle\neg \varphi)$=
   \\
   $t(pre(\alpha_i) \wedge \neg \langle U,\alpha_i\rangle \varphi)$

   \item $t(\langle U,\alpha_i\rangle (\varphi \wedge \psi))$=
   \\
   $t(\langle U,\alpha_i\rangle\varphi \wedge \langle U,\alpha_i\rangle\psi)$

   \item $t(\langle U,\alpha_i\rangle K_i \varphi)$=
   \\
   $t(pre(\alpha_i) \wedge K_i \varphi)$

   \item $t(\langle U,\alpha_i\rangle \langle U',\beta_i\rangle \varphi)$= 
   \\
   $t(\langle U\circ U',\alpha_i;\beta_i\rangle \varphi)$

   \item $t (\SP_i( U,\alpha_i|p))$=
   \\
   $t(pre(\alpha_i) \wedge p \wedge e^{\alpha_i}_i)$
   
   \item $t(\SP_i( U,\alpha_i|\neg \varphi))$=
   \\
   $t(pre(\alpha_i) \wedge \neg \SP_i( U,\alpha_i|\varphi))$

   \item $t(\SP_i( U,\alpha_i|\varphi \wedge \psi))$= 
   \\
   $t(\SP_i( U,\alpha_i|\varphi) \wedge \SP_i( U,\alpha_i|\psi))$

   \item $t(\SP_i(\alpha_i|K_i\varphi))$= $t(K_i \SP_i( U,\alpha_i|\varphi))$

   \item $t(\SP_i( U,\alpha_i|\langle U',\beta_i\rangle\varphi))$= $t(\langle U\circ U',\alpha_i;\beta_i\rangle\varphi \wedge e^{\alpha_i}_i)$

   \item $t(\SP_i( U,\alpha_i|\SP_i( U',\beta_i|\varphi)))$= $t(\SP_i(U\circ U',\alpha_i;\beta_i|\varphi) \wedge e^{\alpha_i}_i)$
 
\end{itemize}
 \end{multicols}
    
\end{definition}

\begin{definition}[Complexity measure]
    We define a complexity function $c: \mathcal{L}^{DAML} \to \mathbb{N}$ by recursion on $\varphi$:

\begin{itemize}
    \item $c(p)=1$
    \item $c(e^{\alpha_i}_i)=1$
    \item $c(\neg \varphi)= 1 + c(\varphi)$
    \item $c(\varphi \wedge \psi)= 1+ Max(c(\varphi),c(\psi))$
    \item $c(K_{i} \varphi)=1 + c(\varphi)$
    \item $c(\langle U,\alpha_i\rangle\varphi)= (4+c(pre(\alpha_i)) \cdot c(\varphi)$
    \item $c (\SP_i( U,\alpha_i|\varphi))= (5+c(pre(\alpha_i)) \cdot c(\varphi) $
\end{itemize}
\end{definition}

It is easy to check that the complexity of any formula is strictly greater than the complexity of any proper sub-formula.

\begin{restatable}[]{theorem}{T}
\label{translation}
For every formula $\varphi \in \mathcal{L}^{DAML}$, $\vdash \varphi \leftrightarrow t(\varphi)$

\end{restatable}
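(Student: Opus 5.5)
We argue by strong induction on the complexity measure $c(\varphi)$, following the standard reduction argument for AML in \cite{ditmarsch2007dynamic}. The fact we lean on throughout is the one noted after the definition of $c$: every formula on the right-hand side of a translation clause has strictly smaller complexity than the formula on the left. So the induction hypothesis applies to each right-hand side. Moreover, $\vdash \chi\leftrightarrow\chi'$ is preserved under the Boolean connectives and under $K_i$ (the latter by \textbf{KN} and the \textbf{K} axiom from \textbf{S5}). This lets us replace provable equivalents inside contexts built from $\neg,\wedge,K_i$.

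The base cases $p$ and $e^{\alpha_i}_i$ are trivial, since $t$ is the identity there. For $\neg\varphi$, $\varphi\wedge\psi$ and $K_i\varphi$ we apply the induction hypothesis to the immediate subformulas, using $c(\varphi)<c(\neg\varphi)$ etc., and then the congruence remarks above.

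For the dynamic clauses $\langle U,\alpha_i\rangle\chi$, we split on the main connective of $\chi$. In each case the AML axioms \textbf{AM1}--\textbf{AM5} (read with $\langle\cdot\rangle$ as the dual of $[\cdot]$) give $\vdash \langle U,\alpha_i\rangle\chi\leftrightarrow\rho$, where $\rho$ is exactly the formula inside $t(\cdot)$ on the right of the corresponding clause. We check $c(\rho)<c(\langle U,\alpha_i\rangle\chi)$; this is where the factor $4+c(pre(\alpha_i))$ is used. For instance, for negation we need $c(pre(\alpha_i)\wedge\neg\langle U,\alpha_i\rangle\varphi) = 1+\max(c(pre(\alpha_i)),\,1+(4+c(pre(\alpha_i)))c(\varphi))$ to be below $(4+c(pre(\alpha_i)))(1+c(\varphi))$, and it is. The induction hypothesis then yields $\vdash\rho\leftrightarrow t(\rho)=t(\langle U,\alpha_i\rangle\chi)$, and chaining the two equivalences closes the case.

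The $\SP_i$ clauses are handled the same way, splitting on the main connective of $\varphi$ in $\SP_i(U,\alpha_i|\varphi)$. In each of the six subcases the corresponding reduction axiom among \textbf{R1}--\textbf{R6} supplies $\vdash\SP_i(U,\alpha_i|\varphi)\leftrightarrow\rho$, with $\rho$ the argument of $t$ in \Cref{def:trans}. We then verify, with the weight $5+c(pre(\alpha_i))$, that $c(\rho)$ is strictly smaller, and invoke the induction hypothesis.

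The main obstacle is the complexity bookkeeping for the composition clauses: \textbf{AM5} and \textbf{R5}--\textbf{R6}. There the precondition changes to $pre''(\alpha_i;\beta_i)=\langle U,\alpha_i\rangle pre'(\beta_i)$, and a weight of $4+c(pre'')$ must be compared with the nested product $(4+c(pre(\alpha_i)))(4+c(pre'(\beta_i)))c(\varphi)$. We first expand $c(pre'')=(4+c(pre(\alpha_i)))\,c(pre'(\beta_i))$ and check the inequality directly, as in the AML proof of \cite{ditmarsch2007dynamic}. The extra conjunct $e^{\alpha_i}_i$ in \textbf{R5}/\textbf{R6} only adds $1$ through the $\max$. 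The other delicate point is the mismatch in \textbf{R6}, where $\SP_i(U\circ U',\ldots)$ on the right still contains $\SP_i$. This is harmless because its complexity is lower, so the induction hypothesis covers it. If the strict inequality fails for some clause with the given measure, we fall back to the AML trick of proving the result first for formulas whose innermost dynamic/deontic operator is applied to an $\La$-formula, and then proceeding outward by induction on the number of such operators.
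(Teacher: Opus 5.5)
Your overall route matches the paper's proof: strong induction on $c$, with each clause closed by the matching axiom, a strict decrease of $c$, and the induction hypothesis. The composition bookkeeping you single out is harmless. Put $a=c(pre(\alpha_i))$, $b=c(pre'(\beta_i))$, $x=c(\varphi)$, and use your expansion $c(pre''(\alpha_i;\beta_i))=(4+a)b$, which is more accurate than the paper's (it computes this as if $pre''$ were a conjunction). The margins for \textbf{AM5}, \textbf{R5}, \textbf{R6} are then $(12+4a)x$, $(16+4a+b)x-1$ and $(20+5a+b)x-1$, so no fallback is needed.

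The step that fails is your claim that \textbf{AM1}--\textbf{AM5} deliver \emph{exactly} the formula inside $t$. For the epistemic clause $t(\langle U,\alpha_i\rangle K_i\varphi)=t(pre(\alpha_i)\wedge K_i\varphi)$ this is false. \textbf{AM4} (with $[U,\beta_i]$ in place of the misprinted $[U,\alpha_i]$) gives $\langle U,\alpha_i\rangle K_i\varphi\leftrightarrow pre(\alpha_i)\wedge\bigwedge_{\alpha_iQ_i\beta_i}K_i[U,\beta_i]\varphi$, where the dynamic operator survives. The printed clause is not even valid: take an S5 model with two $i$-indistinguishable worlds $w\models p$, $u\models\neg p$, and the reflexive-only $U$ with $pre(\alpha_i)=p$, $pre(\beta_i)=\neg p$. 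Then $\langle U,\alpha_i\rangle K_ip$ holds at $w$, but $p\wedge K_ip$ does not. So no sound derivation closes this case.

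Simply correcting the clause is not enough, because the natural corrected clause breaks your complexity check. For reflexive-only $U$, $c(pre(\alpha_i)\wedge K_i\neg\langle U,\alpha_i\rangle\neg\varphi)=3+(4+a)(1+x)$, which exceeds $c(\langle U,\alpha_i\rangle K_i\varphi)=(4+a)(1+x)$. For general $Q_i$, the pointwise weight $4+c(pre(\alpha_i))$ does not bound the $c(pre(\beta_i))$ that now appear. The clause and the measure have to be redesigned together. For instance, you could state the right side with $K_i(pre(\beta_i)\rightarrow\langle U,\beta_i\rangle\varphi)$, and weight $\langle U,\alpha_i\rangle$ by a bound on all preconditions of $U$ plus a constant large enough to absorb the extra connectives.

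Second, your case splits are not exhaustive, so the induction does not reach all of $\La^{DAML}$. \textbf{AM1} covers only $p\in Prop$, and neither $t$ nor the axioms treat $\langle U,\alpha_i\rangle e^{\beta_j}_j$, $\langle U,\alpha_i\rangle O_j(U',\beta_j|\varphi)$ or $O_i(U,\alpha_i|e^{\beta_j}_j)$. They also do not treat $O_i(U,\alpha_i|\chi)$ with $\chi$ of the form $K_j\varphi$, $\langle U',\beta_j\rangle\varphi$ or $O_j(U',\beta_j|\varphi)$ for $j\neq i$, because \textbf{R4}--\textbf{R6} fix the agent. The paper's own example $O_b(U,\delta_b|O_a(U',\beta_a|K_aA))$ is of this last kind, so $t$ is undefined on it.

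The paper's proof passes over both problems by calling the non-deontic cases standard and listing only six $O_i$-cases, so you have faithfully reproduced its argument. Neither version proves the statement for every formula until $t$ is corrected and made total, with reduction axioms supplied for the missing cases.
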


\begin{proof}

We first prove the following auxiliary lemma  , showing that  the complexity of the formula on the left is greater than the complexity of the reduced formula on the right:
\begin{lemma}
\label{lem:>}
The following inequalities hold:
\begin{enumerate}

   \item $c (\SP_i( U,\alpha_i|p)) > c(pre(\alpha_i) \wedge p \wedge e^{\alpha_i}_i)$
   
   \item $c(\SP_i( U,\alpha_i|\neg \varphi)) > c(pre(\alpha_i) \wedge \neg \SP_i( U,\alpha_i|\varphi))$

   \item $c(\SP_i( U,\alpha_i|\varphi \wedge \psi)) > c(\SP_i( U,\alpha_i|\varphi) \wedge \SP_i( U,\alpha_i|\psi))$

   \item $c(\SP_i( U,\alpha_i|K_i\varphi)) > c(K_i \SP_i( U,\alpha_i|\varphi))$

   \item $c(\SP_i( U,\alpha_i|\langle U',\beta_i\rangle\varphi)) > c(\langle U\circ U',\alpha_i;\beta_i\rangle\varphi \wedge e^{\alpha_i}_i)$

   \item $c(\SP_i( U,\alpha_i|\SP_i( U',\beta_i|\varphi))) > c(\SP_i(U\circ U',\alpha_i;\beta_i|\varphi) \wedge e^{\alpha_i}_i)$ 
\end{enumerate}
 
\end{lemma}

\begin{proof}

\begin{enumerate}

    \item $c (\SP_i( U,\alpha_i|p)) > c(pre(\alpha_i) \wedge p \wedge e^{\alpha_i}_i)$
    
    $(5 + c(pre(\alpha_i))) \cdot c(p) > 1 + Max(pre(\alpha_i),c(p),c(e^{\alpha_i}_i))$

    \item $c(\SP_i( U,\alpha_i|\neg \varphi)) > c(pre(\alpha_i) \wedge \neg \SP_i( U,\alpha_i|\varphi))$

    $(5+ c(pre(\alpha_i))) \cdot (1 + c(\varphi)) > 1 + Max(c(pre(\alpha_i)), (1+ (5+ c(pre(\alpha_i)) \cdot c(\varphi))))$ 

    $(5+ c(pre(\alpha_i))) \cdot (1 + c(\varphi)) > 1 + Max(c(pre(\alpha_i)), 6+ c(pre(\alpha_i))\cdot c(\varphi))$

    \item $c(\SP_i( U,\alpha_i|\varphi \wedge \psi)) > c(\SP_i( U,\alpha_i|\varphi) \wedge \SP_i( U,\alpha_i|\psi))$

    $(5 + c(pre(\alpha_i))) \cdot (1 +Max( c(\psi), c(\varphi))) > 1 + Max((5+  c(pre(\alpha_i))) \cdot c(\varphi) , (5+  c(pre(\alpha_i))) \cdot c(\psi))$

   \item $c(\SP_i( U,\alpha_i|K_i\varphi)) > c(K_i \SP_i( U,\alpha_i|\varphi))$

   $(5 + c(pre(\alpha_i))) \cdot (c(\varphi)+1) > 1 + ((5 + c(pre(\alpha_i))) \cdot c(\varphi))$

   \item $c(\SP_i( U,\alpha_i|\langle U',\beta_i\rangle\varphi)) > c(\langle U\circ U',\alpha_i;\beta_i\rangle\varphi \wedge e^{\alpha_i}_i)$

   $(5 + (c(pre(\alpha_i))) \cdot ((4+ c(pre(\beta_i))) \cdot c(\varphi))) > 1+ Max ((4+ c(pre(\alpha_i;\beta_i)) \cdot c(\varphi)), c(e^{\alpha_i}_i))$

   $(5 + (c(pre(\alpha_i))) \cdot ((4+ c(pre(\beta_i))) \cdot c(\varphi))) > 1+ Max ((4+ 1 + Max(c(pre(\alpha_i),c(pre(\beta_i)))) \cdot c(\varphi)), c(e^{\alpha_i}_i))$

   \item $c(\SP_i( U,\alpha_i|\SP_i( U',\beta_i|\varphi))) > c(\SP_i(U\circ U',\alpha_i;\beta_i|\varphi) \wedge e^{\alpha_i}_i)$

   $(5 + (c(pre(\alpha_i)))) \cdot ((5+ c(pre(\beta_i))) \cdot c(\varphi)) > 1+ Max ((5+ 1 +Max(c(pre(\alpha_i),c(pre(\beta_i)) ))) \cdot c(\varphi), c(e^{\alpha_i}_i))$

\end{enumerate}

\end{proof}
    
We can now move to the proof of \Cref{translation}, which is by induction on $c(\varphi)$. All cases that do not involve the new modality are standard \cite{ditmarsch2007dynamic}. Here we show only the new cases:
\begin{itemize}
    \item For $\SP_i( U,\alpha_i|p)$: It follows from axiom \textbf{R1} of \Cref{def:redAx}, item 1 of \Cref{lem:>}, and the induction hypothesis.
    \item For $\SP_i( U,\alpha_i|\neg \varphi)$: It follows from axiom \textbf{R2} of \Cref{def:redAx}, item 2 of \Cref{lem:>}, and the induction hypothesis.
    \item For $\SP_i( U,\alpha_i|\varphi \wedge \psi)$: It follows from axiom \textbf{R3} of \Cref{def:redAx}, item 3 of \Cref{lem:>}, and the induction hypothesis.
    \item For $\SP_i( U,\alpha_i|K_i\varphi)$: It follows from axiom \textbf{R4} of \Cref{def:redAx}, item 4 of \Cref{lem:>}, and the induction hypothesis.
    \item For $\SP_i( U,\alpha_i|\langle U',\beta_i\rangle\varphi)$: It follows from axiom \textbf{R5} of \Cref{def:redAx}, item 5 of \Cref{lem:>}, and the induction hypothesis.
    
    \item For $\SP_i( U,\alpha_i|\SP_i( U',\beta_i|\varphi))$: It follows from axiom \textbf{R6} of \Cref{def:redAx}, item 6 of \Cref{lem:>}, and the induction hypothesis.
\end{itemize}

\end{proof}

\begin{theorem}[Completeness] For every formula $\varphi \in \mathcal{L}^{DAML}$

$$\models \varphi \text{ implies } \vdash \varphi$$
\end{theorem}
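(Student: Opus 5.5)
The plan is the standard reduction argument for dynamic epistemic logics: transfer validity of $\varphi$ to validity of its translation $t(\varphi)\in\La$, use completeness of the deontic-free fragment $\La$, and then transfer derivability back along the provable equivalence of \Cref{translation}.

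\emph{Step 1 (semantic transfer).} Suppose $\models \varphi$. By \Cref{translation} we have $\vdash \varphi \leftrightarrow t(\varphi)$. By soundness (\Cref{thm:soundness}) this equivalence is valid, so $\models t(\varphi)$ as well.

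\emph{Step 2 (completeness of $\La$).} Now $t(\varphi)$ is a formula of $\La$, which is AML enriched with the expectation atoms $e^{\alpha_i}_i$. As argued at the start of this section, the AML reduction axioms \textbf{AM1}--\textbf{AM5} in turn rewrite $t(\varphi)$ provably into a formula of static epistemic logic over the extended atom set $Prop\cup\varepsilon$. For that static language the canonical-model construction for \textbf{S5} goes through unchanged, treating the $e^{\alpha_i}_i$ as fresh atoms. This yields $\vdash t(\varphi)$ in the AML fragment, hence in \textbf{DAML}, since \textbf{DAML} contains all AML axioms and rules.

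\emph{Step 3 (closing).} From $\vdash t(\varphi)$ and $\vdash \varphi\leftrightarrow t(\varphi)$, propositional reasoning with \textbf{MP} gives $\vdash\varphi$.

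The main obstacle is Step 2. Treating the $e^{\alpha_i}_i$ as free atoms is harmless for the static completeness proof. However, the valid formulas over graded models also satisfy \textbf{E1} and \textbf{E2}, and \textbf{E2} imposes a genuine frame-level constraint: it rules out canonical worlds where $e^{\alpha_i}_i$ holds but $K_i e^{\alpha_i}_i$ fails. So I would verify two things. First, the canonical model built from maximal consistent sets of \textbf{S5}$+$\textbf{E1}$+$\textbf{E2} respects these axioms, so the truth lemma holds for the expectation atoms. Second, every such canonical model can be realized as an agent-based action-generated submodel with a suitable desirability function $f$, so that the abstract atoms coincide with the semantic clause via $\E_i$. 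I would aim to handle the second point by choosing $f$ so that exactly the intended actions attain the maximal expectation. If this realization fails in general, the fallback is to read completeness relative to the class of models in which expectation atoms are interpreted as arbitrary valuations subject to \textbf{E1}--\textbf{E2}.
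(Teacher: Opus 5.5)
Your outline is the paper's proof: translation theorem plus soundness to get $\models t(\varphi)$, completeness of $\La$, then \textbf{MP}. You correctly locate the step the paper passes over. Validity over graded models is not validity with the $e^{\alpha_i}_i$ read as free atoms, so ``AML $\vdash t(\varphi)$'' does not follow; \textbf{E2} itself is valid but not AML-derivable with fresh atoms. But you leave this step open, and your plan for it cannot succeed, for two reasons.

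First, \textbf{AM1}--\textbf{AM5} do not rewrite $t(\varphi)$ into static logic over $Prop\cup\varepsilon$. \textbf{AM1} covers only $p\in Prop$, and $t$ has no clause for $\langle U,\alpha_i\rangle e^{\beta_j}_j$. Nor is any analogue $[U,\alpha_i]e^{\beta_j}_j\leftrightarrow(pre(\alpha_i)\to e^{\beta_j}_j)$ sound: an expectation atom depends on $f$ over a whole cell, and the update prunes cells. In the paper's obligation-to-inform example, $\alpha_a$ is $a$'s optimal action before $\delta_b$ and $\beta_a$ after it, so $e^{\alpha_a}_a$ changes truth value under $(U,\delta_b)$.

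Second, even statically, S5$+$\textbf{E1}$+$\textbf{E2} misses validities, so your realization step is false, not merely unverified. Let $\alpha_i\neq\beta_i$ be reflexive-only actions of one decision point with $pre(\alpha_i)=pre(\beta_i)$. The agent-based action-generated submodels from $(w,\alpha_i)$ and $(w,\beta_i)$ are copies with the same $f$-values, so $e^{\alpha_i}_i\leftrightarrow e^{\beta_i}_i$ is valid. Yet a one-world S5 model valuing the atoms freely, with $e^{\alpha_i}_i$ true and $e^{\beta_i}_i$ false, satisfies \textbf{E1} and \textbf{E2}. Your fallback gives completeness for a different semantics, not the stated theorem.

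Step 1 is also broken, again as in the paper, because the soundness theorem it cites fails for \textbf{R3}. With $\varphi:=\bot$, \textbf{R3} reads $\SP_i(U,\alpha_i|\top)\leftrightarrow pre(\alpha_i)\wedge\neg\SP_i(U,\alpha_i|\bot)$. The right side is equivalent to $pre(\alpha_i)$, while the left side also requires $\alpha_i$ to be expectation-maximal. So the instance fails wherever $\alpha_i$ is executable but not optimal, e.g.\ for $\alpha_i$ in the Miners' puzzle by the paper's own analysis. The soundness argument silently drops the conjunct $e^{\alpha_i}_i$ in its last step. \textbf{R4} has a similar defect: $\langle U,\alpha_i\rangle K_i\varphi$ needs $pre(\alpha_i)$ only at the evaluation point, but $K_i\SP_i(U,\alpha_i|\varphi)$ needs it throughout the $i$-cell.

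Consequently, validity does not transfer along $t$. Reading the bare atom as the paper does in \textbf{R1}, $\SP_i(U,\alpha_i|\top)\to e^{\alpha_i}_i$ is valid, but its translation is equivalent to the invalid $pre(\alpha_i)\to e^{\alpha_i}_i$. Step 1 could be made sound by replacing \textbf{R1}--\textbf{R6} with the single definitional axiom $\SP_i(U,\alpha_i|\varphi)\leftrightarrow\langle U,\alpha_i\rangle\varphi\wedge e^{\alpha_i}_i$. Everything then rests on Step 2. That step requires reduction principles for $\langle U,\alpha_i\rangle e^{\beta_j}_j$ and an axiomatization of the atoms backed by a representation theorem over graded models; neither your proposal nor the paper supplies these.
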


\begin{proof}
Suppose $\models \varphi$. Thus, 
$\models t (\varphi) $ by the soundness of the proof system (\Cref{thm:soundness}) and $\textbf{DAML} \vdash \varphi \leftrightarrow t(\varphi)$ (\Cref{translation}). In particular $t (\varphi)$ does not contain any deontic operator. Thus, $AML \vdash t (\varphi)$, since 
the logic for $\mathcal{L}$ is complete \cite{ditmarsch2007dynamic}. Consequently, $\textbf{DAML} \vdash t(\varphi)$.
Since $\textbf{DAML} \vdash \varphi \leftrightarrow t (\varphi)$, $ \textbf{DAML} \vdash \varphi$.

\end{proof}

In the next section, we illustrate the scope of our framework by modeling two examples, namely the miner's puzzle and a multi-agent example.

\section{Applications}
\label{sec:examples}

In this section, we illustrate our framework's ability in modeling the single-agent miner's puzzle, and a more convoluted two-agents example with asymmetric information.
Before moving to the examples, we make some important remarks:

\begin{remark}[Agents' perspective]
    \textbf{DAML} represents agents' hypothetical reasoning about the consequences of their available actions and in evaluating which choices yield the most desirable overall outcomes. Accordingly, the Kripke models presented in this section are interpreted from the perspective of an agent simulating and comparing action outcomes before committing to a decision.
\end{remark}

\begin{remark}[Actual world and actual action]
\label{rem:actual}
In our examples, the evaluation point of a Kripke model does not necessarily represent the actual world, as which world is actual depends on the action performed by the agent. In our framework, agents are not running experiments to discover which world is the actual world, as in standard epistemic logic, but rather they are simulating the potential effects of actions, in order to determine which one leads to the most desirable consequences.

Likewise, action models do not have an actual action being performed. They are all equally actual as part of a simulation. 
In addition, the precondition function of each action does not actually represent only the precondition for performing that action - rather it also represents its epistemic effects on the model. We kept the name \textit{precondition function} in line with the existing literature.
\end{remark}

\subsection{Miner's puzzle}
\label{sec:miner}

We model the miner's puzzle from the perspective of a single agent $i$ deciding which shaft (if any) to block.
Since in the single-agent case $\calM = \calM_i$, we will omit the agent subscript on models.

The initial situation is depicted in \Cref{fig:Miners_AM} (top left). 
Each world corresponds to a possible outcome, with one variable indicating the position of the miners (shaft $A$ or shaft $B$) and the other representing the number of lives that can be saved, which in this scenario also corresponds to the desirability value of each world: $f(A \wedge 10)= 10, f(A \wedge 0)= 0$ and $f(A \wedge 9)= 9$ and analogously for the worlds where $B$ holds.
We assume complete ignorance of agent $i$ concerning the current state of affairs.

Agent $i$ can perform three actions, as represented in the action model $U$ of \Cref{fig:Miners_AM}(top right):
Action $\alpha_i$ (block shaft $A$) can save ten lives (if the miners are in the shaft $A$), or zero otherwise: $pre(\alpha_i)= (A \wedge 10) \vee (B \wedge 0)$. 
Analogously for action $\beta_i$ (block shaft $B$), $pre(\beta_i)= (A \wedge 0) \vee (B \wedge 10)$.
Blocking neither shaft (action $\gamma_i$) will ensure the saving of nine lives, independently of the location of the miners: $pre(\gamma_i)=9$.
None of the actions informs $i$ about the actual location of the miners.

$\calM \otimes U$ \Cref{fig:Miners_AM} (below) is the result of $i$'s reasoning about the epistemic consequences of its available actions, and it consists of three disconnected action-generated submodels: $\calM^{\alpha_i}, \calM^{\beta_i}, \calM^{\gamma_i}$.
Based on this results, we can evaluate deontic statements in the initial model. For example, we can check whether $i$ ought to block shaft $A$ given the current information, which amounts to check formula $\calM \models \SP_i ( U,\alpha_i|\top)$. By \Cref{def:semLA} it amounts to check the truth of the formula: $\calM,v \models \langle  U,\alpha_i \rangle \top \text{ and } \calM^\alpha \models e^{\alpha_i}_i$.
While the first conjunct is true, the second one is not: $\calM^\alpha \not \models e^{\alpha_i}_i$, as $\E_i(\calM^\gamma) > \E_i(\calM^\alpha) = \E_i(\calM^\beta)$. Thus, $\calM \not \models \SP_i ( U,\alpha_i|\top)$.
Similarly, $\calM \not\models \SP_i ( U,\beta_i|\top)$. On the other hand, $\calM \models \SP_i ( U,\gamma_i|\top)$, meaning that for agent $i$, given the current information, blocking neither shaft leads to the highest expected deontic value.

\begin{figure}[t]
    \centering
    \resizebox{9cm}{!}{%

\begin{tikzpicture}
    \begin{scope}[]
    \node[circle,draw,minimum size=12mm, label=135:$10$] (1) {$A,10$};
    \node (g') [above = .5cm  of 1]  {\Large{$\calM$}};
    \node (g) [right = .9cm  of 1]  {};
    \node[circle,draw,minimum size=12mm, label=135:$9$,label=45:$v$] (2) [above = .5cm  of g]  {$\underline{A,9}$};
    \node[circle,draw,minimum size=12mm, label=45:$0$] (3) [right = 2cm  of 1] {$A,0$}; 
    \node[circle,draw,minimum size=12mm,, label=135:$10$] (4) [below = 1cm  of 1] {$B,10$};
    \node (h) [right = .9cm  of 4]  {};
    \node[circle,draw,minimum size=12mm,, label=180:$9$] (5) [below = .5cm  of h]  {$B,9$};
    \node[circle,draw,minimum size=12mm, label=45:$0$] (6) [below = 1cm  of 3] {$B,0$};

    \node[label=135:$\alpha_i$, draw,minimum size=10mm,text width=1.5cm] (7) [right = 2cm  of 5] {$(A \wedge 10) \vee$ \\ $(B \wedge 0)$};
    \node[draw, label=45:$\beta_i$,minimum size=10mm, text width=1.5cm ] (8) [right = 1cm  of 7]  {$(A \wedge 0) \vee$  \\ $(B \wedge 10)$};
    \node (a) [right = .5cm  of 7] {};
    \node[draw, label=135:$\gamma_i$, minimum size=10mm] (9) [above = 1cm  of a]  {$9$};
    \node (b) [left = 1cm  of 9] {\Large{$U$}};

\node[circle,draw,minimum size=12mm, label=135:\large{$\calM^{\alpha_i}$}, label=180:$10$] (10) [below = 3.5cm  of 5] {$A,10$};
    
    \node (c) [right = .25cm  of 10]  {};
    
    \node[] (11) [below = .25cm  of 10] {};
    
    \node[circle,draw,minimum size=12mm,label=180:$0$] (12) [right = .5cm  of 11] {$B,0$};  
    
    \node[circle,draw,minimum size=12mm,, label=45:\large{$\calM^{\beta_i}$}, label=0:$0$] (13) [right = 2.75cm  of 10] {$A,0$};
    
    \node[circle,draw,minimum size=12mm,label=0:$10$] (14) [right = .5cm  of 12] {$B,10$};
    
    \node (c') [right = 1cm  of 10]  {};
    
    \node[circle,draw,minimum size=12mm,label=180:$9$] (15) [above = .5cm  of c'] {$B,9$};
    
    \node[circle,draw,minimum size=12mm, label=135:\large{$\calM^{\gamma_i}$},label=180:$9$,label=45:$v$] (16) [above = .5cm  of 15] {$\underline{A,9}$};
    
    \node (h) [left = 1cm  of 16]  {\Large{$\calM \otimes U$}};

    \path[]
        (1) edge node[above] {} (2)
        (1) edge node[above] {} (3)
        (3) edge node[above] {} (2)
        (5) edge node[above] {} (2)
        (1) edge node[above] {} (4)
        (1) edge node[above] {} (6)%
        (3) edge node[above] {} (6)
        (3) edge node[above] {} (4)%
        (4) edge node[above] {} (6)
        (4) edge node[above] {} (5)
        (5) edge node[above] {} (6);

     \path[->] 
          (7) edge[loop left,looseness=5] node[left] {} (7)
          (8) edge[loop right,looseness=5] node[right] {} (8)
          (9) edge[loop right,looseness=5] node[right] {} (9);

    \path[]
        (10) edge node[above] {} (12)
        (13) edge node[above] {} (14)
        (16) edge node[left] {} (15)
        ;
         
    \end{scope}
\end{tikzpicture}
    }
    \caption{Initial pointed  graded Kripke model $\calM,v$ of the miner's puzzle (left). The evaluation point $v$ is underlined. 
    Reflexive loops and agent-labels on relations are omitted. Numbers close to the worlds represents their desirability value.
    Action model $U$ representing $i$'s possible actions (right). 
    Updated model $\calM \otimes U$ (below) made of three action-generated submodels, one for each action, $\calM^{\alpha_i}$, $\calM^{\beta_i}$, $\calM^{\gamma_i}$.
    Reflexive loops are omitted.}
    \label{fig:Miners_AM}
\end{figure}

\subsection{Obligation to inform}
\label{sec:med}

A patient $p$ has condition $C$ and is treated at the hospital where Alice ($a$) and Bethany ($b$) work. 
The condition can be treated with two drugs, $d$ or $d'$. Drug $d$ has almost a $100\%$ success rate for condition $C$, unless an allergy is present, which makes it completely ineffective. Drug $d'$, has a $40\%$ success rate both for people who are allergic to $d$, and for those who aren't. It is up to $a$ to administer the drug.
$p$ is allergic to drug $d$, and $a$ does not know that.
On the other hand, $b$ knows that $p$ is allergic to the drug, and knows that $a$ believes that $b$ also does not know.

We model the example using two agents, $a$ and $b$, and we assume the viewpoint of the latter.
The initial model is represented in \Cref{fig:Allergy1} (left). Because of the asymmetry in agent's knowledge, the model is \textbf{KD45}. Each world has two variables, one indicating the presence (resp. absence) of the allergy $A$ (resp. $\neg A$) and one for which drugs is administered ($d$ or $d'$). The desirability values correspond to the success rate of the treatment, abstracting away from other details.
In particular, The top two worlds are accessible only to $b$: she knows $p$ has allergy $A$. She also knows that $a$ does not know, and that $a$ believes that also $b$ does not know, as represented by the four remaining worlds.
The purpose of the modeling is to represent $b$'s hypothetical reasoning, simulating the effect of the combined actions available to both agents.

We use two action models, $U$ and $U'$, one decision point per agent. $U$ \Cref{fig:Allergy1} (top left) represents $b$'s choice between informing $a$ of the allergy (action $\delta_b$ with $pre(\delta_b)= A$) or simply doing nothing (action $\gamma_b$ with  $pre(\gamma)=\top$). $U'$ \Cref{fig:Allergy1} (bottom left) represents $a$'s choice between administering the drug $d$ (action $\alpha_i$ with $pre(\alpha_a)= d$) or $d'$ (action $\beta_i$ with $pre(\beta_a)=d'$).

$\calM \otimes U$ (\Cref{fig:Allergy2}) represents the possible outcomes of $b$'s actions in its two action-generated submodels $\calM^{v,\delta_b}$ and $\calM^{v,\gamma_b}$.

$\calM \otimes U \otimes U'$ (\Cref{fig:Allergy3}) represents the result of $a$'s possible actions executed after $b$'s possible actions, generating four action-generated submodels, from left to right $\calM^{w,(\delta_b;\alpha_a)}$, $\calM^{v,(\delta_b;\beta_a)}$, $\calM^{w,(\gamma_b;\alpha_a)}$ and $\calM^{v,(\gamma_b;\beta_a)}$.
Each action-generated submodel represents the consequences of each combination of actions for the two agents\footnote{This model can be thought of a game-tree in game theoretic settings. Exploring the connections to game theory is left for future work.}.

Based on these resulting submodels, it is possible compute agent's obligations in the initial model. 
In particular, $a$'s expected deontic value for each action generated submodel is: $\E_a(\calM^{w,(\delta_b;\alpha_a)})= 0$, $\E_a(\calM^{v,(\delta_b;\beta_a)}) = \E_a(\calM^{v,(\gamma_b;\beta_a)}) $$= 40$, $\E_a(\calM^{w,(\gamma_b;\alpha_a)})= 50$. This means that the best possible outcome is where $p$ is not allergic and drug $d$ is administered. But this case is considered possible only by $a$ in case she is not informed that $p$ is allergic. From $b$'s perspective, $\E_b(\calM^{w,(\delta_b;\alpha_a)})= 0$, $\E_b(\calM^{v,(\delta_b;\beta_a)}) = \E_b(\calM^{v,(\gamma_b;\beta_a)})$ $= 40$, $\E_b(\calM^{w,(\gamma_b;\alpha_a)})= 0$, because she knows $p$ is allergic. In particular, the expected deontic value of $b$ is maximal whenever $a$ administers $d'$.

Also note that $\calM_b^v \models K_b \SP_a ( U',\beta_a|A)$, i.e., $b$ knows that $a$ should perform $\beta_a$ if $A$ were the case, and $\calM_b^v \models K_b \SP_a ( U',\alpha_a|\top)$, i.e., $b$ knows that, without further information, $a$ should instead perform $\alpha_a$.

The goal of this modeling is to capture the fact that, if $a$ knew that $A$ (as $b$ already does), $a$ should perform $\beta_a$, reason why $b$ should perform $\delta_b$.
This statement can be expressed using the formula: $\calM_b^v \models \SP_b ( U,\delta_b | \SP_a ( U',\beta_a |K_aA))$. By semantic reasoning (\Cref{def:semLA}), we can unfold it in to the equivalent:

$\calM_b^v,v \models pre(\delta_b)$ and $\calM^{v,\delta_b}_b, (v,\delta_b) \models pre(\beta_a)$ and $\calM^{v,(\delta_b;\beta_a)}_a, (v,(\delta_b;\beta_a)) \models K_aA$ and $\calM^{v,(\delta_b;\beta_a)}_a \models e^{\delta_b;\beta_a}_a$ and $\calM^{v,\delta_b}_b \models e^{\delta_b}_b$.

The first two conjuncts are easily checked, as $pre(\delta_b)= A$ and $pre(\beta_a) = d'$, which respectively hold in $\calM_b^v,v$ of \Cref{fig:Allergy1} (left),
and in
$\calM^{v,\delta_b}_a, (v,\delta_b)$ of \Cref{fig:Allergy2} (left).
The third conjunct is evaluated in the model $\calM^{v,(\delta_b;\beta_a)}_a$ of \Cref{fig:Allergy3} (second from the left)), resulting from applying $\delta_b$ followed by $\beta_a$, where $K_a A$ holds as in every world that is $a$-accessible, $A$ holds.
Finally, the last two conjuncts hold as $\E_b(\calM^\delta_b) \geq \E_b(\calM^\gamma_b)$ and $\E_a(\calM^{\delta;\beta}_a) \geq \E_a(\calM^{\delta;\alpha}_a)$.

Similarly, it can be checked that $\calM_b^v \not\models \SP_b ( U,\gamma_b | \SP_a ( U',\beta_a |K_aA))$, which means that $b$ should not avoid informing $a$, as $a$ would perform $\beta_a$ if $a$ knew $A$ (as $b$ does). 
What makes the formula fail is the fact that $\calM^{v,\gamma_b}_a, (v,\gamma_b) \models \langle  U',\beta_a\rangle K_aA$ does not hold, as $\calM^{v,(\gamma_b;\beta_a)}_a, (v,(\gamma_b;\beta_a)) \not\models K_aA$, since action $\gamma_b$ does not enforce $A$ in $a$'s submodel.

\begin{figure}
 \centering
  \resizebox{9.2cm}{!}{%
\begin{tikzpicture}
\node[circle, draw, minimum size=14mm, label=135:$0$,label=90:$w$] (1) {$A, d$};
    \node[circle, draw, minimum size=14mm,label=45:$40$,label=90:$v$] (2) [right = 1cm  of 1]  {$\underline{A, d'} $};
    \node[circle, draw, minimum size=14mm,label=135:$0$] (3) [below = 1cm  of 1]  {$A, d$};
    \node[circle, draw, minimum size=14mm,label=45:$40$] (4) [right = 1cm  of 3]  {$ A, d'$};
    \node[circle, draw, minimum size=14mm,label=135:$100$] (5) [below = 1cm  of 3]  {$ \neg A, d$};
    \node[circle, draw, minimum size=14mm,label=45:$40$] (6) [right = 1cm  of 5] {$\neg A, d'$};
    \node (h) [right = .5cm  of 1] {};
    \node (q) [above = 1cm  of h] {\Large{$\calM$}};
    
    \node[] (W) [right = 2.5cm  of 2] {};
    \node[ label=135:$\delta_b$, draw,minimum size=8mm] (7) [below = .5cm  of W] {$A$};
    \node[ draw, minimum size=8mm, label=45:$\gamma_b$] (8) [right = 1cm  of 7]  {$\top$};
    \node (ha) [below = .5cm  of 7] {};
    \node (h) [right = .6cm  of ha] {\Large{$U$}};

    \path[->] 
          (7) edge[loop left,looseness=4,in=205,out=240] node[left] {$a,b$} (7)
          (8) edge[loop right,looseness=4,in=295,out=330] node[right] {$a,b$} (8);

    \node[label=135:$\alpha_a$, draw,minimum size=8mm] (9) [below = 2cm  of 7] {$d$};
    \node[draw, minimum size=8mm, label=45:$\beta_a$] (10) [right = 1cm  of 9]  {$d'$};
    \node (da) [below = .5cm  of 9] {};
    \node (d) [right = .6cm  of da] {\Large{$U'$}};

    \path[->] 
          (9) edge[loop left,looseness=4,in=205,out=240] node[left] {$a,b$} (9)
          (10) edge[loop right,looseness=4,in=295,out=330] node[right] {$a,b$} (10);

    \path[]
          (1) edge[] node[above] {$b$} (2)
          (3) edge[] node[below] {$a,b$} (4)
          (3) edge[] node[right] {$a,b$} (5)
          (6) edge[] node[above] {$a,b$} (5)
          (4) edge[] node[left] {$a,b$} (6)
          ;
          
    \path[-{Stealth[length=2mm, width=2mm]}] 
          (1) edge[] node[left] {$a$} (3)
          (1) edge[] node[left] {$a$} (4)
          (2) edge[] node[below] {$a$} (3)
          (2) edge[] node[left] {$a$} (4);

    \path[->]
          (1) edge [loop left,looseness=4,in=205,out=240] node[left] {$b$} (1)
          (2) edge [loop right,looseness=4,in=295,out=330] node[right] {$b$} (2)
          (3) edge [loop left,looseness=4,in=205,out=240] node[left] {$a,b$} (3)
          (4) edge [loop right,looseness=4,in=295,out=330] node[right] {$a,b$} (4)
          (5) edge [loop left,looseness=4,in=205,out=240] node[left] {$a,b$} (5)
          (6) edge [loop right,looseness=4,in=295,out=330] node[right] {$a,b$} (6);

\begin{pgfonlayer}{background}
\filldraw [line width=6mm,line join=round,black!10]
      (1.north  -| 2.east)  rectangle (1.south  -| 1.west);

\filldraw [line width=6mm,line join=round,black!10]
      (3.north  -| 4.east)  rectangle (6.south  -| 3.west);
\end{pgfonlayer}
       
\end{tikzpicture}

    }
    \caption{Initial  graded Kripke model $\calM$ (left).
    Transitive arrows are omitted. The evaluation point $v$ is underlined. Gray rectangles represent agents' submodels from $v$, the top being $\calM^v_b$ and the bottom is $\calM^v_a$. 
    Action model $U$ (right top), representing $b$'s possible actions $\delta_b$ and $\gamma_b$. Action model $U'$ (right bottom), representing $a$'s possible actions $\alpha_a$ and $\beta_a$.}
    \label{fig:Allergy1}
\end{figure}

\begin{figure}
 \centering
\resizebox{8.5cm}{!}{
\begin{tikzpicture}
    \node[circle, draw, minimum size=12mm, label=135:$0$,label=90:$w{,}\delta_b$] (1) {$A, d$};
    \node[circle, draw, minimum size=12mm,label=45:$40$, label=90:$v{,}\delta_b$] (2) [right = 1cm  of 1]  {$\underline{A, d'} $};
    \node[circle, draw, minimum size=12mm,label=135:$0$] (3) [below = 1cm  of 1]  {$A, d$};
    \node[circle, draw, minimum size=12mm,label=45:$40$] (4) [right = 1cm  of 3]  {$ A, d'$};
    \node (g) [right = .5cm  of 1]  {};
    \node (g) [above = 1cm  of g]  {\Large{$\calM^{v,\delta_b}$}};

    \path[]
          (1) edge[] node[above] {$b$} (2)
          (3) edge[] node[below] {$a,b$} (4);
          
   \path[-{Stealth[length=2mm, width=2mm]}] 
          (1) edge[] node[left] {$a$} (3)
          (1) edge[] node[left] {$a$} (4)
          (2) edge[] node[below] {$a$} (3)
          (2) edge[] node[left] {$a$} (4);

    \path[->]
          (1) edge [loop left,looseness=4,in=205,out=240] node[left] {$b$} (1)
          (2) edge [loop right,looseness=4,in=295,out=330] node[right] {$b$} (2)
          (3) edge [loop left,looseness=4,in=205,out=240] node[below] {$a,b$} (3)
          (4) edge [loop right,looseness=4,in=295,out=330] node[below] {$a,b$} (4);

  \begin{pgfonlayer}{background}
\filldraw [line width=7mm,line join=round,black!10]
      (1.north  -| 2.east)  rectangle (1.south  -| 1.west);

\filldraw [line width=6mm,line join=round,black!10]
      (3.north  -| 4.east)  rectangle (3.south  -| 3.west);
\end{pgfonlayer}

\node[circle, draw, minimum size=12mm, label=135:$0$,label=90:$w{,}\gamma_b$] (7) [right = 1.5cm  of 2] {$A, d$};
    \node[circle, draw, minimum size=12mm,label=45:$40$,label=90:$v{,}\gamma_b$] (8) [right = 1cm  of 7]  {$\underline{A, d'} $};
    \node[circle, draw, minimum size=12mm,label=135:$0$] (9) [below = 1cm  of 7]  {$A, d$};
    \node[circle, draw, minimum size=12mm,label=45:$40$] (10) [right = 1cm  of 9]  {$ A, d'$};
    \node (c) [right = .5cm  of 7]  {};
    \node (c') [above = 1cm  of c]  {\Large{$\calM^{v,\gamma_b}$}};
 
    \node[circle, draw, minimum size=12mm,label=135:$100$] (11) [below = 1cm  of 9]  {$ \neg A, d$};
    \node[circle, draw, minimum size=12mm,label=45:$40$] (12) [right = 1cm  of 11] {$\neg A, d'$};

\node (h) [left = 2.5cm  of c'] {};
\node (w) [above = .3cm  of h] {\Large{$\calM \otimes U$}};

    \path[]
          (7) edge[] node[above] {$b$} (8)
          (9) edge[] node[below] {$a,b$} (10)
          (9) edge[] node[right] {$a,b$} (11)
          (12) edge[] node[above] {$a,b$} (11)
          (10) edge[] node[left] {$a,b$} (12)
          ;
          
    \path[-{Stealth[length=2mm, width=2mm]}] 
          (7) edge[] node[left] {$a$} (9)
          (7) edge[] node[left] {$a$} (10)
          (8) edge[] node[below] {$a$} (9)
          (8) edge[] node[left] {$a$} (10);

    \path[->]
          (7) edge [loop left,looseness=4,in=205,out=240] node[left] {$b$} (7)
          (8) edge [loop right,looseness=4,in=295,out=330] node[right] {$b$} (8)
          (9) edge [loop left,looseness=4,in=205,out=240] node[left] {$a,b$} (9)
          (10) edge [loop right,looseness=4,in=295,out=330] node[right] {$a,b$} (10)
          (11) edge [loop left,looseness=4,in=205,out=240] node[left] {$a,b$} (11)
          (12) edge [loop right,looseness=4,in=295,out=330] node[right] {$a,b$} (12);
          
\begin{pgfonlayer}{background}
\filldraw [line width=7mm,line join=round,black!10]
      (7.north  -| 8.east)  rectangle (7.south  -| 7.west);

\filldraw [line width=6mm,line join=round,black!10]
      (9.north  -| 10.east)  rectangle (12.south  -| 9.west);
\end{pgfonlayer}

\end{tikzpicture}
    }
    \caption{Updated model $\calM \otimes U$, with corresponding action-generated submodels $\calM^{v,\delta_b}$ (left) and $\calM^{v,\gamma_b}$ (right), representing the epistemic outcomes of $b$'s available actions. Transitive arrows are omitted. The evaluation points $(v,\gamma)$ and $(v,\delta)$ are underlined. In each action-generated submodel, gray rectangles represent the agent-based submodels from $v,\delta$, and respectively $v,\gamma$, the top one for agent $b$ and the bottom for $a$ in both.
    }
    \label{fig:Allergy2}
\end{figure}

\begin{figure}
 \centering
\resizebox{9.5cm}{!}{%
\begin{tikzpicture}

    \node[circle, draw, minimum size=12mm, label=135:$0$,label=90:$w{,}(\delta_b{;}\beta_a)$] (1) {$A, d$};
    \node[circle, draw, minimum size=12mm,label=45:$40$,label=90:$v{,}(\delta_b{;}\beta_a)$] (2) [right = 1cm  of 1]  {$\underline{A, d'} $};
    \node[circle, draw, minimum size=12mm,label=135:$0$] (3) [below = 1cm  of 1]  {$A, d$};
    \node[circle, draw, minimum size=12mm,label=45:$40$] (4) [right = 1cm  of 3]  {$ A, d'$};
    \node (g) [above = .5cm  of 2]  {\Large{$\calM^{v,(\delta_b;\beta_a)}$}};
    \node (g') [above = .5cm  of 1]  {\Large{$\calM^{w,(\delta_b;\alpha_a)}$}};

    \node (h) [right = 1cm  of g] {};
    \node (h1) [above = .5cm  of h] {\Large{$\calM \otimes U \otimes U'$}};

   \path[-{Stealth[length=2mm, width=2mm]}] 
          (1) edge[] node[left] {$a$} (3)
          (2) edge[] node[left] {$a$} (4);

    \path[->]
          (1) edge [loop left, looseness=4,in=205,out=240] node[left] {$b$} (1)
          (2) edge [loop right, looseness=4,in=295,out=330] node[right] {$b$} (2)
          (3) edge [loop left, looseness=4,in=205,out=240] node[below] {$a,b$} (3)
          (4) edge [loop right, looseness=4,in=295,out=330] node[below] {$a,b$} (4);

  \begin{pgfonlayer}{background}
\filldraw [line width=8mm,line join=round,black!10]
      (1.north  -| 1.east)  rectangle (1.south  -| 1.west);

\filldraw [line width=8mm,line join=round,black!10]
      (2.north  -| 2.east)  rectangle (2.south  -| 2.west);

\filldraw [line width=6mm,line join=round,black!10]
      (3.north  -| 3.east)  rectangle (3.south  -| 3.west);

\filldraw [line width=6mm,line join=round,black!10]
      (4.north  -| 4.east)  rectangle (4.south  -| 4.west);
\end{pgfonlayer}


\node[circle, draw, minimum size=12mm, label=135:$0$,label=90:$w{,}(\gamma_b{;}\beta_a)$] (7) [right = 1.6cm  of 2] {$A, d$};
    \node[circle, draw, minimum size=12mm,label=45:$40$,label=90:$v{,}(\gamma_b{;}\beta_a)$] (8) [right = 1cm  of 7]  {$\underline{A, d'} $};
    \node[circle, draw, minimum size=12mm,label=135:$0$] (9) [below = 1cm  of 7]  {$A, d$};
    \node[circle, draw, minimum size=12mm,label=45:$40$] (10) [right = 1cm  of 9]  {$ A, d'$};
   
    \node (c) [above = .5cm  of 8]  {\Large{$\calM^{v,(\gamma_b;\beta_a)}$}};
    \node (c1) [above = .5cm  of 7]  {\Large{$\calM^{w,(\gamma_b;\alpha_a)}$}};
 
    \node[circle, draw, minimum size=12mm,label=135:$100$] (11) [below = 1cm  of 9]  {$ \neg A, d$};
    \node[circle, draw, minimum size=12mm,label=45:$40$] (12) [right = 1cm  of 11] {$\neg A, d'$};


    \path[]
          (9) edge[] node[right] {$a,b$} (11)
          (10) edge[] node[left] {$a,b$} (12)
          ;
          
    \path[-{Stealth[length=2mm, width=2mm]}] 
          (7) edge[] node[left] {$a$} (9)
          (8) edge[] node[left] {$a$} (10);

    \path[->]
          (7) edge [loop left, looseness=4,in=205,out=240] node[left] {$b$} (7)
          (8) edge [loop right, looseness=4,in=295,out=330] node[right] {$b$} (8)
          (9) edge [loop left, looseness=4,in=205,out=240] node[left] {$a,b$} (9)
          (10) edge [loop right, looseness=4,in=295,out=330] node[right] {$a,b$} (10)
          (11) edge [loop left, looseness=4,in=205,out=240] node[left] {$a,b$} (11)
          (12) edge [loop right, looseness=4,in=295,out=330] node[right] {$a,b$} (12);

\begin{pgfonlayer}{background}
\filldraw [line width=8mm,line join=round,black!10]
      (7.north  -| 7.east)  rectangle (7.south  -| 7.west);

\filldraw [line width=8mm,line join=round,black!10]
      (8.north  -| 8.east)  rectangle (8.south  -| 8.west);

\filldraw [line width=6mm,line join=round,black!10]
      (9.north  -| 9.east)  rectangle (11.south  -| 9.west);

\filldraw [line width=6mm,line join=round,black!10]
      (10.north  -| 10.east)  rectangle (12.south  -| 10.west);
\end{pgfonlayer}

\end{tikzpicture}

    }
    \caption{Updated model $\calM \otimes U \otimes U'$, representing all the epistemic outcomes of the actions of $b$ followed by the actions of $a$, with four corresponding action-generated submodels, from left to right: $\calM^{w,(\delta_b;\alpha_a)}, \calM^{v,(\delta_b;\beta_a)}, \calM^{w,(\gamma_b;\alpha_a)}$ and $\calM^{v,(\gamma_b;\beta)}$. 
    Transitive arrows are omitted. 
    In each action-generated submodel, gray rectangles represent the agent-based submodels the top one for $b$ and the bottom for $a$. 
    }
    \label{fig:Allergy3}
\end{figure}

\section{Conclusion and future work}
\label{sec:conclusion}

This paper introduced Deontic Action Model Logic (\textbf{DAML}), a novel multi-agent deontic modal logic that integrates the dynamic machinery of action model logic with a value-based evaluation of actions. 
From a knowledge representation perspective, \textbf{DAML} offers a structured way to encode and reason about normative information, epistemic uncertainty, and action effects within a unified logical framework. This makes it suitable for modeling and analyzing decision problems faced by autonomous agents operating in uncertain and normatively constrained domains.
We established soundness and completeness for the logic and illustrated its expressive power through applications to deontic scenarios.

There are several promising directions for future work: On the epistemic side, introducing explicit probability measures over worlds would yield a more fine-grained notion of expected deontic value, while replacing the knowledge operator with a graded belief operator would allow the framework to capture varying degrees of uncertainty about states of affairs and action outcomes. On the deontic side, alternative ways of computing expectation functions could support additional modalities, such as permissions and prohibitions. Finally, enriching the action language with more expressive constructs, including non-deterministic choice, would enable the study of conflicting obligations and more sophisticated forms of decision-making in dynamic multi-agent settings.

\section*{Funding}
This work was supported by the Austrian Science Fund (FWF) project A Logical Framework for Graded Deontic Reasoning [10.55776/PAT2141924] (\doi{10.55776/PAT2141924}).

\section*{Acknowledgments}

I am thankful to Christian Fermüller, Xavier Parent and Henri Thölke
for the multiple illuminating discussions.

\bibliographystyle{plainurl}

\bibliography{references}

\end{document}